\newtheorem{thm}{Theorem}
\newcommand{\Hcd}{H_\text{corr}}
\newcommand{\ucd}{U_\text{d}}
\newcommand{\ub}{U_\text{CD}}
\newcommand{\ua}{U_\text{E}}
\newcommand{\uad}{U_\text{ad}}
\newcommand{\hb}{H_\text{CD}}
\newcommand{\hcd}{H_\text{corr}}
\newcommand{\h}{\mathfrak{h}}
\newcommand{\p}{\mathfrak{p}}
\newcommand{\ha}{H_\text{E}}
\newcommand{\hh}{\bm{\mathrm{H}}}
\newcommand{\spn}{\text{span}\{\bm{\mathrm{H}}\}}
\newcommand{\spnn}{\text{span}\{-i\bm{\mathrm{H}}\}}
\newcommand{\norm}[1]{\left\lVert#1\right\rVert}
\newcommand{\hlz}{H_\text{LZM}}
\newcommand{\ma}{M_\text{E}}
\newcommand{\mb}{M_\text{CD}}
\newcommand{\mc}{M_\text{corr}}
\newcommand{\fb}{f_\text{CD}}
\newcommand{\su}{\mathfrak{su}}
\newcommand{\sx}{\sigma_x}
\newcommand{\sy}{\sigma_y}
\newcommand{\sz}{\sigma_z}
\DeclareMathOperator{\sign}{sgn}
\DeclareMathOperator{\tr}{Tr}
\newcommand{\If}{\mathcal{I}_\text{F}}
\begin{document}

\preprint{APS/123-QED}

\title{Fast adiabatic evolution by oscillating initial Hamiltonians}

\author{Francesco Petiziol}
 \email{francesco.petiziol@studenti.unipr.it}
\affiliation{Dipartimento di Scienze Matematiche, Fisiche e Informatiche, Universit\`a di Parma, Parco Area delle Scienze 7/a, 43124 Parma, Italy}
\affiliation{INFN, Sezione di Milano Bicocca, Gruppo Collegato di Parma, Parco Area delle Scienze 7/a, 43124 Parma, Italy}
\author{Benjamin Dive}%
\affiliation{Department of Physics, Imperial College, SW7 2AZ London, UK}%
\author{Florian Mintert}
\affiliation{Department of Physics, Imperial College, SW7 2AZ London, UK}
\author{Sandro Wimberger}
\affiliation{Dipartimento di Scienze Matematiche, Fisiche e Informatiche, Universit\`a di Parma, Parco Area delle Scienze 7/a, 43124 Parma, Italy}
\affiliation{INFN, Sezione di Milano Bicocca, Gruppo Collegato di Parma, Parco Area delle Scienze 7/a, 43124 Parma, Italy}

\begin{abstract}
We propose a method to produce fast transitionless dynamics for finite dimensional quantum systems without requiring additional Hamiltonian components not included in the initial control setup, remaining close to the true adiabatic path at all times. The strategy is based on the introduction of an effective counterdiabatic scheme: a correcting Hamiltonian is constructed which approximatively cancels nonadiabatic effects, inducing an evolution tracking the adiabatic states closely. This can be absorbed into the initial Hamiltonian by adding a fast oscillation in the control parameters. We show that a consistent speed-up can be achieved without requiring strong control Hamiltonians, using it both as a standalone shortcut-to-adiabaticity and as a weak correcting field. A number of examples are treated, dealing with quantum state transfer in avoided-crossing problems and entanglement creation.
\end{abstract}

\maketitle

\section{\label{sec:intro}Introduction}

Quantum adiabatic processes \cite{messiah1961qm} are ubiquitous in quantum science and they represent an important resource for quantum control.
The adiabatic theorem states that when the system Hamiltonian $H(t)$ is varied slowly enough in time from an initial configurations $H(t_i)$ to a final configuration $H(t_f)$, then a state initially in an eigenstate of $H(t)$ will remain so during the whole evolution \cite{messiah1961qm,kato1950}. This makes such protocols intrinsically robust against experimental imperfections. On the other hand, the necessity of very long timescales for their implementation dramatically limits the number of operations which can be performed on the system within reasonable coherence times.

Recently, much work has been done towards the design of adiabatic-inspired control strategies which, on the one hand, inherit the robustness of the adiabatic dynamics, while, on the other, avoid the necessity of slow driving.
Among such so-called ``shortcuts to adiabaticity'' (STAs), a particularly promising method was introduced under the name of counterdiabatic (CD) \cite{rice1} or transitionless \cite{berry1} quantum driving. The basic idea which is put forward is that it is always possible to reverse-engineer a correcting Hamiltonian $\hb(t)$ such that the  total Hamiltonian $H(t) + \hb(t)$ keeps the system in the instantaneous eigenvectors of $H(t)$ without requiring it to change slowly -- that is, such that the adiabatic dynamics is an exact solution of the time-dependent Schr\"odinger equation.

When facing a quantum control problem, one must take into account that only a fairly restricted number of Hamiltonians can be realized and controlled in practice. The crucial drawback of the CD method is that, although a well-defined expression for computing $\hb(t)$ exists, the correcting field typically requires time-dependent control of complex interactions, and more generally of Hamiltonians which do not belong to the available control setup. As a result, the implementation of $\hb(t)$ is often tricky if even possible \cite{bason1,rice1,jarz1}. For this reason, different STAs have been developed which connect adiabatic states at different desired times, but completely deviate from the adiabatic states at intermediate times \cite{rice2,ibanez2012,torrontegui2014,martinez2014,deffner2014,li2016,Baksic2016}, therefore giving up the benefits of true adiabaticity.

Here, we propose a method for achieving fast adiabatic driving remaining close to the adiabatic path, without needing new unrealizable terms in the Hamiltonian. This works by modulating the original Hamiltonian of the system in time such that it effectively replicates the dynamics induced by $\hb$ without needing any additional control Hamiltonian on the quantum system.

In order to do so, we first resort to control-theoretic techniques to study how the matrix structure of the correcting field $\hb$ is related to the initial set of control Hamiltonians which constitute $H$. This shows that $\hb$ can always be emulated, to arbitrary precision, by introducing a suitable (fast) time dependence in the control parameters. Second, we identify a class of Hamiltonians, generalizing the set of real ones, for which the matrix structure of $\hb$ can be discussed on general algebraic grounds and always involves components which are not directly controllable. They can, however, be simulated arbitrarily well using existing protocols.

Building on these results, we describe the construction of an effective counterdiabatic (E--CD) field $\ha(t)$, which is a time-dependent combination of the initially available control Hamiltonians. The control functions in $\ha$ will be chosen to be oscillating fast with respect to the natural time dependence of $H$, and $\ha$ will be enforced to simulate the dynamics $\ub$ induced by $\hb$.
As a result, the E--CD evolution tracks the adiabatic path, being arbitrarily close at a set of sampling time points while slightly deviating at intermediate ones. The general idea of producing the $\ub$ dynamics approximatively by working with a set of available control terms was also pursued, yet with a different strategy, in Ref. \cite{opatrny201}.

One must take into account that the adiabatic requirement of slow drivings can be recast as the need of strong fields. With this in mind, we characterize the efficiency of the E--CD scheme in terms of final fidelity and duration for different strengths of the correcting Hamiltonian $\ha$ as compared with the uncorrected $H$. 
We show that the E--CD field, when acting as an auxiliary weak term in $H$, realizes a consistent speed up of the adiabatic evolution. 
The E--CD Hamiltonian also works as a stand-alone STA, giving even better results but at the price of completely losing true adiabaticity. The net result is the ability to generate entanglement or perform state transfer significantly faster which, when taking into account noise from an external environment, inexorably leads to higher fidelities.

After recalling the theory of CD driving in Sec. \ref{sec:cddriving}, the control-theoretic setting in presented in Sec. \ref{sec:control}, together with the results on the general matrix structure of the correcting Hamiltonian $\hb$. The effective CD method is introduced in Sec. \ref{sec:approximate}, where the derivation of the E--CD field is fully worked through for a single spin system. The results are then exemplified via a number of applications in Sec. \ref{sec:examples}, the first being the Landau-Zener-Majorana model \cite{landau,zener,majorana,nori}. We afterwards show that a two-qubit entangled state can be prepared with $99.9\%$ fidelity ten times faster with respect to purely adiabatic evolution, using an E--CD field with strength comparable to that of the original Hamiltonian. We further discuss the case of a three-level system dynamically undergoing a sequence of avoided crossings in the energy spectrum \cite{Theisen2017}. In this section, the E--CD method is also benchmarked against traditional finite-time adiabatic driving, and the main advantages and limitations are discussed in detail, before proceeding to the conclusion in Sec. \ref{sec:conclusion}.

\section{Counterdiabatic driving} \label{sec:cddriving}
In this section, we recall the central elements of the theory of counterdiabatic fields \cite{rice1,berry1} which underpins our method.

A unitary evolution $U(t)$ is always a solution of the Schr\"odinger equation with Hamiltonian $H_U(t)=i \hbar \partial_t U U^{\dagger}$. Equivalently, the same $H_U(t)$ is a reverse-engineered Hamiltonian producing dynamics $U(t)$ exactly. Our aim is to find this $H_{U}(t)$ for the $U(t)$ which performs perfect adiabatic transfer. By construction, the new Hamiltonian will give the desired dynamics in an arbitrarily short period of time.

Let the initial Hamiltonian of the physical system be 
$$H(t)=\sum_{n=1}^N E_n(t) \ket{n(t)}\bra{n(t)},$$
having instantaneous eigenvalues $E_n(t)$ and instantaneous eigenvectors $\ket{n(t)}$. Let $\ucd(t) = \sum_{n} e^{-i \varphi_n(t)/\hbar} \ket{n(t)}\bra{n(t_0)}$, with $\varphi_n$ arbitrary phases, be a unitary matrix which diagonalizes $H(t)$ at all times. That is,
\begin{equation}\label{eq:diagonalize}
\ucd^{\dagger}(t) H(t) \ucd(t) = \text{diag}\{E_1(t),\dots,E_N(t)\}.
\end{equation}
The corresponding reverse-engineered Hamiltonian $\hcd = i \hbar \partial_t \ucd \ucd^{\dagger}$ reads
$$\hcd(t) = \sum_n \partial_t \varphi_n(t) \ket{n(t)}\bra{n(t)} + \hb(t), $$
where we have introduced the CD field \cite{berry1} 
\begin{equation}
\hb(t) = i \hbar \sum_n \ket{\partial_t n(t)} \bra{n(t)}.
\end{equation}
The Hamiltonian $\hcd$ drives the instantaneous eigenvectors of $H(t)$ exactly, with relative phase factors $\varphi_n(t)$, $e^{-i \varphi_n(t)/\hbar} \ket{n(t)}$. In particular, one can choose $\varphi_n(t)=0$ for all $n$, which gives $\hcd=\hb$. Therefore, implementing just $\hb$ is sufficient if one is only interested in preserving the populations of the instantaneous eigenvectors. Another interesting choice is $\varphi_n(t) = \int^t E_n(t') dt'$, in which case $\varphi_n$ are the adiabatic dynamic phase factors. This gives $\hcd = H(t) + \hb$, so that $\hb$ can be interpreted as a correcting field acting beside the initial Hamiltonian $H(t)$.

Alternative expressions of $\hb$ are useful in order to highlight specific properties. For example, from
\begin{equation}\label{eq:hb2}
\hb = i \hbar \sum_{m\ne n}\sum_{n=1}^N \frac{\ket{m}\bra{m} \partial_t H \ket{n} \bra{n}}{E_m-E_n},
\end{equation}
one can see that $\hb$, after having absorbed the geometric phases into the $\varphi_n$ \cite{berry1}, is purely off-diagonal in the basis of instantaneous eigenvectors (adiabatic basis). In terms of the Hilbert--Schmidt inner product $\tr(A^{\dagger} B)$, this implies that $\hb$ is orthogonal to $H$, and in general to all matrices commuting with $H$. It is also orthogonal to $\partial_t H$. Moreover, if we assume that the whole evolution takes place starting from an initial time $t_i$ for a total time $\tau$, one can rescale time according to $s=(t-t_i)/\tau$ and see that $\hb$ scales like $1/\tau$. In other words, the faster the process, the stronger the field $\hb$ should be. This property will be important in our discussion and it is particularly interesting in the context of quantum speed limits \cite{Campbell2017,DeffnerCampbell2017}.

By taking the derivative of Eq. \eqref{eq:diagonalize}, one can obtain the relation
\begin{equation}\label{eq:HBgenerator}
\partial_t H(t) = \frac{i}{\hbar} [H(t),\hb(t)] + \partial D(t),
\end{equation}
where $\partial D(t) = \sum_{n=1}^N \partial_t E_n(t) \ket{n(t)} \bra{n(t)}$.
Equation \eqref{eq:HBgenerator} highlights the fact that $\partial D(t)$ generates the variation of the instantaneous eigenvalues of $H(t)$, while $\hb$ is responsible for the variation of eigenvectors. This is so since $[\partial D(t),H(t)]=0$, and then $\partial D$ makes $H(t)$ ``move'' inside the set of Hamiltonians which commute with $H$ at time $t$, without changing the eigenvectors. On the other hand, the term $i[H,\hb]/\hbar$ determines the deviation from zero of the off-diagonal elements of $H(t+dt)$ with respect to the instantaneous adiabatic basis at time $t$.

\section{Control framework} \label{sec:control}

In order to realize $\hb$, it is of great interest to understand how the structure (matrix components) of the Hamiltonian $\hb$ is related to the structure of the initial Hamiltonian $H$ and to the control resources. 
In this section we exploit techniques from control theory to show that, assuming that $\hb$ can be computed, its action can be approximated at all times, arbitrarily well, by a time-dependent tuning of the initial parameters in the system Hamiltonian. This proves that the realization of our E--CD scheme is actually possible, and the specific construction will be discussed in the next section, Sec. \ref{sec:approximate}.
Second, this study permits us to identify a class of Hamiltonians, that generalizes the set of real Hamiltonians, for which all matrix elements of $\hb$ are not directly accessible without implementing new terms, which are not present in the original Hamiltonian.

Let $H(t)$, of finite dimension $N$, be realized by tuning some available time-independent control Hamiltonians $\hh = \{H_1,\dots, H_M \}$ via a set of continuous control functions $\bm{u}(t) = \{u_1(t), \dots, u_M(t) \}$. That is, $H(t)$ can be expressed in the form 
$$H\{\bm{u}(t) \} = \bm{u}(t)\cdot \hh =  \sum_{k=1}^M u_k(t) H_k.$$ 
Let $-i \hh$ be the vector of skew-Hermitian matrices $\{-i H_1,\dots, -i H_M \}$.
The matrices $-i \hh$ generate the so-called dynamical Lie algebra $\mathcal L$ of the system \cite{dalessandro2007}. This is the smallest algebra which contains $-i \hh$, all possible commutators $[-i H_j, -i H_k]$ of matrices from $-i \hh$, all possible commutators of commutators, and so on, considering all possibly nested commutators $[-i H_l, [\dots,[-i H_j, -i H_k]]\dots ]$.

A basis of $\mathcal L$ can be constructed by calculating, as a first step, the commutator of all possible pairs of matrices drawn from $-i\hh$. Among the new obtained matrices, one should select those which are linearly independent from themselves and the original set, and compute the commutator of such new matrices with this original set. The procedure is repeated iteratively until no new linearly independent elements are produced. An explicit algorithm can be found in \cite{Schirmer2001}. In our context, the linear span $\spn$ can be though of as the set of all possible matrices attainable by $H(t)$ at different times, for different values of the control functions.
 
Dynamical Lie algebras are of central importance in the study of the controllability of quantum systems \cite{dalessandro2007,Dirr,Glaser2015}. This has its origin in the fact \cite{Jurdjevic} that the set of reachable states, i.e., the set of unitary matrices that can be obtained as solution of the controlled Schr\"odinger equation for different choices of the control functions, coincides with the connected Lie group generated by $\mathcal L$.
An intuition for this can be given as follows. If an element $- i H$ belongs to $\mathcal L$, then the Lie group element $e^{-i H t}$ that it generates at time $t$ can be realized, to arbitrary precision, by suitably concatenating group elements generated by Hamiltonians in the initial set $-i \hh$, which belong to the one-parameter subgroups $\{e^{-i H_1 t},\dots,e^{-i H_M t}\}$. From a control-theoretic perspective, this means that the evolution produced by $H$ in a time $t$ can also be obtained by a sequence of evolutions governed by the Hamiltonians $\hh$, in general in a different total time.

Lie algebraic methods were also used in Ref.s \cite{torrontegui2014,martinez2014}, in the context of STA, for designing feasible shortcuts connecting the same initial and final adiabatic states, but following different paths in the Hilbert space.

Assuming that all Hamiltonians involved are made traceless, $\mathcal{L}$ can at most be $\mathfrak{su}(N)$, the algebra of skew-Hermitian traceless matrices generating the group of special unitary matrices $SU(N)$, and has dimension $N^2-1$. When $\mathcal L=\su(N)$, the system is said to be (operator) controllable \cite{dalessandro2007}. The Lie group generated by $\mathcal{L}$ through the exponential map will be denoted by $e^{\mathcal L}$. 

Since $-i \hb(t)$ is a skew-Hermitian matrix, it must belong to $\su(N)$ at all times. Therefore, if $\mathcal L = \su(N)$ then $-i\hb(t)$ is obviously in $\mathcal L$. The first result we show is that $\hb \in \mathcal L$ is a general property of $\hb$, even when $\mathcal L$ is not equal to $\su(N)$, but is rather a smaller subalgebra $\mathcal L \subset \su(N)$ -- that is, even when the system is not fully controllable. This is an interesting result for two reasons. First of all, it allows us to restrict the class of Hamiltonians needed to identify and realize $\hb$, with respect to the full set of traceless Hermitian matrices. Second, and more importantly in the present work, it means that the action of $\hb$ can be always approximated, in the sense of the action of the Lie group, by working only with the initially available Hamiltonians.
This result is stated in the following theorem, whose proof is given in Appendix \ref{appendix:proofs}. In all the following results, we will generally assume that $\hb$ can be computed. This excludes, in general, cases in which evolving eigenvectors can be degenerate \cite{rice2}.

\begin{thm} \label{theorem1} Let the Hamiltonian of the system be expressible, at all times and for all values of the control functions $\bm{u}(t)=\{ u_1(t),\dots,u_M(t)\}$, as a linear combination $H(t) = \sum_{k=0}^M u_k(t) H_k$ of time independent control Hamiltonians $\hh=\{H_1,\dots,H_M\}$. Let $\mathcal L$ be the Lie algebra generated by the matrices $-i \hh$. Assuming the $\hb$ exists, then $-i \hb(t)$ belongs to $\mathcal{L}$, for all times $t$.
\end{thm}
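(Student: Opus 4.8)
The plan is to exploit the generator relation \eqref{eq:HBgenerator} together with the closure of $\mathcal{L}$ under commutation, converting the problem of locating $-i\hb$ into an invertibility statement for the adjoint map of $H$. Write $T \equiv \mathrm{ad}_{-iH}=[-iH,\,\cdot\,]$. Since $-iH(t)=\sum_k u_k(-iH_k)$ is a fixed real combination of the generators, both $-iH$ and $-i\partial_t H=\sum_k \dot u_k(-iH_k)$ lie in $\mathcal{L}$ at every instant; and because $\mathcal{L}$ is a Lie algebra, $T$ maps $\mathcal{L}$ into itself. The first fact I would record is that $T$ is skew-adjoint with respect to the Hilbert--Schmidt inner product on the real space of skew-Hermitian matrices, hence normal: its kernel (the skew-Hermitian matrices commuting with $H$) and its range are orthogonal complements, and $T$ is semisimple. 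Over $\mathbb{C}$ its eigenvalues are $-i(E_m-E_n)$, nonzero precisely when $E_m\neq E_n$, and the nondegeneracy implicit in assuming that $\hb$ exists keeps them bounded away from $0$.

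Next I would place $-i\hb$ inside this decomposition. Since, after absorbing the geometric phases, $\hb$ is purely off-diagonal in the adiabatic basis [Eq.~\eqref{eq:hb2}], it is Hilbert--Schmidt orthogonal to every matrix commuting with $H$, i.e.\ to $\ker T$; hence $-i\hb\in\mathrm{range}\,T$. Rewriting \eqref{eq:HBgenerator} as
\begin{equation}
T(-i\hb)=[-iH,-i\hb]=i\hbar\,(\partial_t H-\partial D),
\end{equation}
and observing that $\partial D$ commutes with $H$ so that $T(\partial D)=0$, exhibits the eigenvalue-variation piece as exactly the kernel component that $T$ annihilates.

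The key step is then to invert $T$ on its range by a polynomial. Because $T$ is semisimple with conjugation-symmetric spectrum, Lagrange interpolation furnishes a real-coefficient polynomial $p$ with $p(0)=0$ and $p(\lambda)=1/\lambda$ on the nonzero spectrum; then $p(T)$ is the pseudoinverse, $p(T)T$ is the orthogonal projector onto $\mathrm{range}\,T$, and $p(T)$ annihilates $\ker T$. Applying $p(T)$ to the displayed identity, using $-i\hb\in\mathrm{range}\,T$ on the left and $p(T)\partial D=0$ on the right, gives $-i\hb=\hbar\,p(T)(i\partial_t H)$. Since $i\partial_t H\in\mathcal{L}$, since $p$ has no constant term and real coefficients, and since $T$ preserves $\mathcal{L}$, every summand $T^{k}(i\partial_t H)$ lies in $\mathcal{L}$; therefore $-i\hb\in\mathcal{L}$, as claimed.

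The main obstacle I anticipate is the $\partial D$ term: the naive reading of \eqref{eq:HBgenerator} places $[H,\hb]$, not $\hb$ itself, in $\mathcal{L}$, and one must certify that inverting the commutator does not reintroduce the uncontrolled diagonal (eigenvalue) dynamics. The pseudoinverse construction settles this cleanly, simultaneously projecting away $\ker T$ (where $\partial D$ lives) and representing the inverse as a nested-commutator polynomial that manifestly remains in $\mathcal{L}$. The only delicate hypothesis is nondegeneracy: at a level crossing the nonzero eigenvalues of $T$ approach $0$, $p$ cannot be chosen uniformly, and $\hb$ itself ceases to exist---precisely the situation excluded in the statement.
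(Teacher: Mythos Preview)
Your argument is correct and takes a genuinely different route from the paper's. The paper works at the \emph{group} level: it writes the propagator $\ub$ via a Magnus expansion, uses the Zassenhaus formula to factor $e^{\mb}$, and then invokes the adiabatic theorem to approximate $\ucd$ by an actual controlled evolution $\uad$, so that $e^{-i\hat H_{\text{CD}}t}$ is arbitrarily well approximated by products of elements of $e^{\mathcal L}$. Your proof stays entirely at the \emph{algebra} level: from the generator identity \eqref{eq:HBgenerator} you recognise $T=\mathrm{ad}_{-iH}$ as a normal operator preserving $\mathcal L$, place $-i\hb$ in $(\ker T)^{\perp}=\mathrm{range}\,T$ using the off-diagonality \eqref{eq:hb2}, and then invert $T$ on its range by a real, odd interpolation polynomial $p$ (the spectrum $\{-i(E_m-E_n)\}$ is conjugation-symmetric, so the Lagrange interpolant of $\lambda\mapsto 1/\lambda$ with the extra node $p(0)=0$ indeed has real coefficients). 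The $\partial D$ term is killed because $p$ has no constant term and $\partial D$ commutes with $H$; this is exactly the obstruction you flag, and your resolution is clean.

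What each approach buys: your argument is shorter and purely linear-algebraic, avoids the Magnus/Zassenhaus machinery and the adiabatic theorem altogether, and in fact produces an explicit nested-commutator formula $-i\hb=\hbar\,p(\mathrm{ad}_{-iH})(i\partial_t H)$ that makes the membership in $\mathcal L$ manifest and is potentially useful for the constructive part of the paper. The paper's route, by contrast, ties the result directly to the reachable-set characterisation of $e^{\mathcal L}$ and to the physical picture that $\ub$ is (approximately) an adiabatic evolution undone by the bare one; this is conceptually in line with the control-theoretic framing of Sec.~\ref{sec:control}, but it is heavier analytically. Both proofs need the nondegeneracy hypothesis at each fixed time---you to keep the nonzero spectrum of $T$ away from $0$ so that $p$ exists, the paper to ensure $\hb$ is well defined in the first place.
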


When the Lie algebra $\mathcal L$ has an additional structure, the general form of $\hb$ can be characterized more accurately.
In order to do so, let us introduce Cartan decompositions, which are important tools in the study of controllability of quantum systems \cite{dalessandro2007,Dirr,KHANEJA2001}. These are decompositions of the algebra $\mathcal L$ into the direct sum form $\mathcal L = \mathfrak{h} \oplus \mathfrak{p}$ which satisfy the commutation relations \begin{equation}\label{eq:Cartancomm}
[\mathfrak{h},\mathfrak{h}]\subseteq \h, \quad [\h,\p]\subseteq \p, \quad [\p,\p]\subseteq \h.
\end{equation}
We can then state the following result (some of the quantities were defined in Sec \ref{sec:cddriving}).

\begin{thm} \label{theorem2}
Given a Cartan decomposition $\mathcal L=\h\oplus \p$, if $\spnn \in \p$ and $-i \partial D\in \p$ for all $t$, then $-i \hb(t) \in \h$ for all $t$.
\end{thm}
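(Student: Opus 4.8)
The plan is to extract the $\p$-membership of the commutator $[H,\hb]$ from the generator identity \eqref{eq:HBgenerator}, feed it into the Cartan bracket relations \eqref{eq:Cartancomm} to constrain the $\p$-part of $-i\hb$, and finally eliminate that part using the off-diagonal structure of $\hb$ in the adiabatic basis. To begin, I would rewrite \eqref{eq:HBgenerator} as $\tfrac{i}{\hbar}[H,\hb]=\partial_t H-\partial D$ and multiply by $-i\hbar$, giving
\begin{equation*}
[H,\hb]=\hbar\,(-i\partial_t H)-\hbar\,(-i\partial D).
\end{equation*}
Since $\partial_t H=\sum_k \dot u_k H_k$, the matrix $-i\partial_t H$ lies in $\spnn\subseteq\p$, while $-i\partial D\in\p$ by assumption; as $\p$ is a linear subspace, the skew-Hermitian commutator $[H,\hb]$ belongs to $\p$.

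Next, Theorem \ref{theorem1} guarantees $-i\hb\in\mathcal L$, so I can split it along the Cartan decomposition as $-i\hb=X_\h+X_\p$ with $X_\h\in\h$ and $X_\p\in\p$. Writing $[H,\hb]=-[-iH,-i\hb]$ and using $-iH\in\spnn\subseteq\p$, the relations \eqref{eq:Cartancomm} force $[-iH,X_\h]\in\p$ and $[-iH,X_\p]\in\h$. Because the sum $\h\oplus\p$ is direct and $[H,\hb]\in\p$ has just been established, the $\h$-component must vanish, i.e. $[-iH,X_\p]=0$; equivalently, $X_\p$ commutes with $H$.

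The final step upgrades ``$X_\p$ commutes with $H$'' to ``$X_\p=0$''. Here I would use the Hilbert--Schmidt product $\langle A,B\rangle=\tr(A^\dagger B)$ together with two facts: (i) $\hb$ is purely off-diagonal in the adiabatic basis, hence orthogonal to every matrix commuting with $H$, so that $\langle -i\hb,X_\p\rangle=0$; and (ii) a Cartan decomposition is orthogonal with respect to the Hilbert--Schmidt product (the restriction to $\su(N)$ of the Killing form), so that $\langle X_\h,X_\p\rangle=0$. Writing $X_\p=-i\hb-X_\h$ then yields
\begin{equation*}
\norm{X_\p}^2=\langle -i\hb,X_\p\rangle-\langle X_\h,X_\p\rangle=0,
\end{equation*}
forcing $X_\p=0$ and hence $-i\hb=X_\h\in\h$, as claimed.

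The main obstacle is precisely this last elimination: the bracket relations alone pin down $X_\p$ only up to the subspace commuting with $H$, and it is the combination of the off-diagonality of $\hb$ with the Hilbert--Schmidt orthogonality of the Cartan splitting that removes the residual $\p$-component. I would take particular care to justify point (ii) for the decompositions relevant here, and to track the factors of $i$ and $\hbar$ so that each object is correctly identified as an element of $\mathcal L$, $\h$, or $\p$.
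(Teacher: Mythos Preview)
Your first two steps---extracting $[H,\hb]\in\p$ from Eq.~\eqref{eq:HBgenerator} and then splitting $-i\hb=X_\h+X_\p$ to obtain $[-iH,X_\p]=0$---are exactly what the paper does. The paper, however, stops there: its entire third step is the sentence ``from commutation relations defining the Cartan decomposition, one can then conclude that $-i\hb\in\h$,'' with no further justification.

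Your third step is therefore a genuine addition that renders the argument complete. You correctly observe that the bracket relations alone only force $X_\p$ into the commutant of $H$, and you eliminate it by combining the Hilbert--Schmidt orthogonality of $\hb$ to that commutant (its off-diagonality in the adiabatic basis, already noted after Eq.~\eqref{eq:hb2}) with the orthogonality of $\h$ and $\p$ under the Hilbert--Schmidt product. Your caution about point (ii) is well placed; it holds whenever the Cartan involution is an isometry of the trace form, which is automatic for simple $\mathcal L\subseteq\su(N)$ (the invariant form being unique up to scale) and is immediate for the concrete decompositions the paper invokes, such as the real/imaginary split $\mathfrak{so}(N)\oplus\mathfrak I$. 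In short, you follow the paper's route but supply the missing step.
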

\begin{proof}
If $-i\hh \in \p$, then also $-i \partial_t H \in \p$, since it can be written as a linear combination $-i \partial_t H = -i \partial_t \bm{u}(t) \cdot \hh$. If also $-i \partial D \in \p$ for all $t$, then from Eq. \eqref{eq:HBgenerator} it holds $ [H,\hb] \in \p$. From commutation relations defining the Cartan decomposition, one can then conclude that $-i \hb \in \h$ for all $t$.
\end{proof}
\begin{figure}
\includegraphics[scale=0.7]{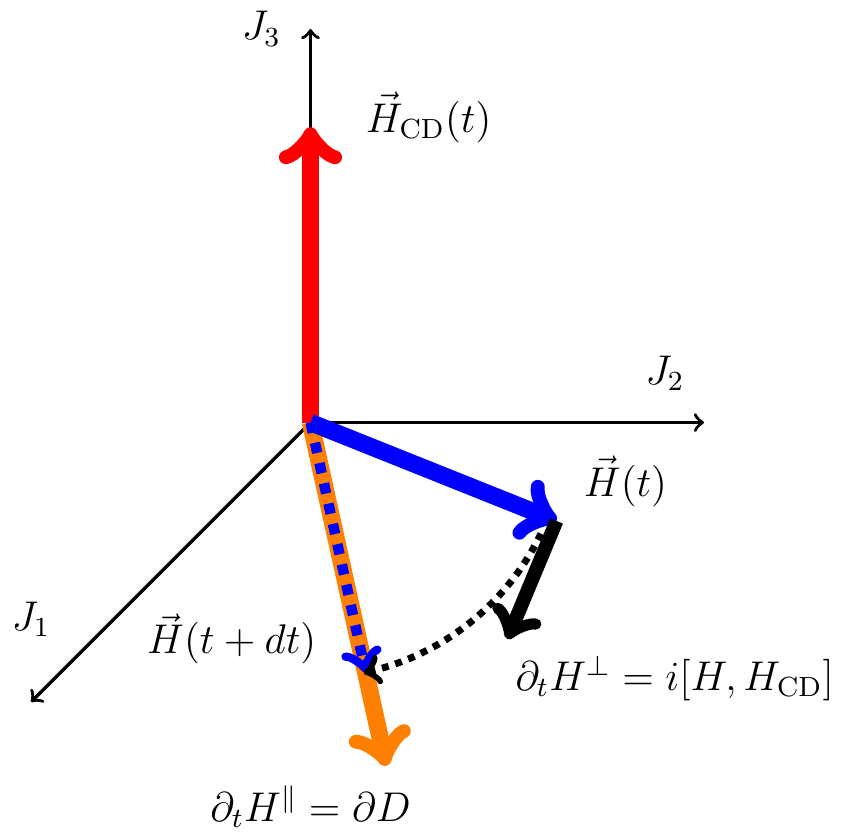}
\caption{Sketch of the geometric interpretation of relation \eqref{eq:HBgenerator} for the algebra $\mathfrak{su}(2)$. The Hamiltonian $H(t) = u_1(t) J_1 + u_2(t) J_2$ can be interpreted as a vector $\vec{H} = \{u_1,u_2,0\}$, whose magnitude $\sqrt{u_1^2+u_2^2}$ is the largest eigenvalue, in the three-dimensional space spanned by the $\su(2)$ generators $J_1,J_2,J_3$. The term $\partial D$ is responsible for variations of magnitude (eigenvalues), which we indicate with $\partial_t H^{\parallel}$. The term $i [H,\hb]$ generates rotations of the vector, i.e., a change of eigenvectors of $H$, and we denote this component of the variation $\partial_t H^{\perp}$.}
\label{fig:vecs}
\end{figure}
Let us clarify the result of Theorem \ref{theorem2} by means of two examples.
The first one is the situation in which $\hh$ includes exclusively real symmetric Hamiltonians. If this is the case, $H(t)$ can always be diagonalized by a (real) special orthogonal matrix $O(t)$. Thus $\hb(t) = i\partial_t O O^{\dagger}$ is necessarily a purely imaginary, skew-symmetric matrix, and $-i \hb(t)$ has thus no component on $-i \hh$ for all $t$.
When the set of control matrices generates the full algebra $\su(N)$, this corresponds to a Cartan decomposition of the form $\mathfrak{so}(N) \oplus \mathfrak{I}$, where $\mathfrak{so}(N)$ is the algebra spanned by the real matrices in $\su(N)$, while $\mathfrak{I}=\mathfrak{so}(N)^{\perp}$ is the set of purely imaginary matrices in $\su(N)$.

The second example is the angular momentum algebra $\mathfrak{su}(2)$ with $\hh$ containing two arbitrary generators $J_i,J_j$ among the three possible $\{J_1,J_2,J_3\}$. Explicit diagonalization and computation of $\hb$ \cite{rice1,berry1} shows that then $\hb$ is proportional to the third generator.
This can be also seen from Eq. \eqref{eq:HBgenerator}: interpreting $H$ geometrically as a vector $\vec{H}(t) = \{H_1,H_2,H_3\}$ in the three-dimensional algebra, the matrix $\partial D$ generates the stretching of the Hamiltonian vector along in its own direction, while $\hb$ generates rotations on the plane spanned by $J_i$ and $J_j$; see Fig. \ref{fig:vecs}. Therefore, recalling the properties of the rotations in 3D, $\hb$ must lie in the direction of $J_k$. In this case the Cartan decomposition is unitarily conjugate to $\mathfrak{so}(2)\oplus \mathfrak{I}$.
\section{Effective CD field} \label{sec:approximate}
From Theorem \ref{theorem1} one knows that (i) $\hb$ can be approximated by a suitable choice of the control functions $\bm{u}(t)$ for all $t$ and that (ii) $\hb$ can be expressed as a linear combination of (nested) commutators of elements of $\hh$. Observation (i) suggests that the problem of finding an effective CD field can be formulated as that of finding a correcting Hamiltonian $\ha$, of fully controllable form 
\begin{equation}\label{eq:ha}
\ha(t) = \bm{c}(t)\cdot \hh = \sum_{k=1}^M c_k(t) H_k,
\end{equation}
which emulates $\hb$. More precisely, the Hamiltonian $\ha$ should produce a dynamics $\ua$ emulating the one induced by $\hb$, i.e., $\ub$.
While the algebraic structure suggests this can be done, it does not constructively specify what the $c_k(t)$ need to be. To find them, from observation (ii), we choose to adopt a representation of the propagators $\ua$ and $\ub$ based on a Magnus expansion, see Appendix \ref{appendix:Magnus}, where terms involving commutators of the Hamiltonians appear naturally. The first two terms are given for completeness,
\begin{align*}
M^{(1)}(t) &= -\frac{i}{\hbar} \int_0^t H(t_1) dt_1, \\
M^{(2)}(t) & = \left(\frac{-i}{\hbar}\right)^2 \int_0^{t} dt_1 \int_0^{t_1} dt_2 [H(t_1),H(t_2)].
\end{align*}
The first step for designing the effective field $\ha$ is to choose an ansatz for the control functions $\bm{c}(t)$ involving a certain number of free parameters. The second, considering a small evolution time $t$, is to ask the first terms of the Magnus expansion of $\ub$ and $\ua$ to coincide up to a desired order in $t$. This requirement will produce constraint equations for the free parameters.  
Since $\ub$ and $\ua$ cannot coincide at all times, one needs to choose a discrete temporal grid $\{t_k\}$ for decomposing the whole evolution, and enforce the approximate equality at such time points. As a result, the approximating dynamics will match the true one at times $\{t_k\}$, while deviating at intermediate times.  

The matrix components of $\hb$ which do not belong to $\spn$ will not appear in the first Magnus term of $H+\ha$, which is essentially the time average of $H+\ha$. They will appear from the commutators in the following terms. Therefore, if one wants $\ha$ to reproduce $\hb$ effectively, it must hold that $\ha$ gives no contribution to the first term, i.e., has vanishing time average on each interval $(t_k,t_{k+1})$. Besides, since the mediated effect will be of some order $m$ greater than one in $t$, the magnitude of the control functions in $\ha$ must be proportional to $t^{-X}$, for some $X>0$ depending on $m$, in order to amplify the mediated effect so that it acts at first order. 

Making these general ideas concrete, we choose the control functions in Eq. \eqref{eq:ha} to have the form of a truncated Fourier series,
\begin{equation}\label{eq:contrfunc}
c_k(t) = \omega^X \sum_{j=1}^L \big[ A_{k,j} \sin(j \omega t) + B_{k,j} \cos(j \omega t)],
\end{equation}
involving at most $L$ harmonics of the fundamental frequency $\omega$.
This way, the needed time grid is naturally defined by the set of stroboscopic times $t_k = t_i+k T$, $k\in \mathbb{N}$, where $T = 2 \pi/\omega$ is the period of the control functions and $t_i$ the initial time. The constant amplitudes $\{A_{k,j},B_{k,j}\}$ are the free parameters used to enforce the constraints on the Magnus expansion at the end of each period. As a result, the control functions so determined will be discontinuous between different periods, due to the jumps in the values of the amplitudes. From a practical perspective, this is of course physically inconvenient, and so one would like to come up with an interpolation providing continuous and possibly smooth functions $A_{k,j}(t), B_{k,j}(t)$. We will see that this can indeed be done, and often in a natural way.

The smaller the period $T$, the better the target dynamics $\ub$ will be sampled, so in general $\ha$ will need to oscillate fast with respect to the time-dependence of $H$. Raising the number of harmonics $L$ permits us to introduce more parameters, and thus to obtain and solve constraint equations produced by higher Magnus terms. Ideally, one would like to find a good compromise between high sampling rates of $\ub$, good approximation at stroboscopic times, while keeping the fundamental frequency and the number of harmonics as small as possible. This will be discussed in detail in Sec. \ref{sec:examples}. 
A strategy based on similar ideas was introduced in Ref. \cite{Verdeny2014} for the purpose of producing effective time-independent Hamiltonians via optimal control.

Let us now focus on the class of Hamiltonians identified by Theorem \ref{theorem2}. Due to the commutation relations from Eq. \eqref{eq:Cartancomm}, the Magnus terms of $\ua$ involving an odd number of commutators, $\ma^{(2k)}$, belong to $\h$. Those involving an even number of them, $\ma^{(2k+1)}$, belong to $\p$ instead. Therefore, the general form of the constraint equations is
\begin{align*}
& \ma^{(2k)} =\mb^{(k)}, \\
& \ma^{(2k+1)} = 0.
\end{align*}

A discussion of the accuracy of the method is presented in Appendix \ref{sec:infiderror}, where an estimate of the expected error in the probability of not being in the target state, at the end of one period, is derived as a function of the number of solved constraint equations.
Let us show the overall procedure by working through a specific case.

\subsection{Worked through case: single spin} \label{sec:su2}
We concentrate here on the case of the Lie algebra $\mathfrak{su}(2)$, which can physically describe, for example, a single spin driven by magnetic fields. A first application of STA methods via CD driving to this kind of system was studied in Ref. \cite{Chen2010}. Since for tracking the instantaneous ground state it is sufficient to implement $\hb$ without $H$, as discussed in Sec. \ref{sec:cddriving}, and for simplicity of presentation, we treat the case in which one wants to approximate $\hb$ alone by means of $\ha$. The general case, $H+\hb$ approximated by $H+\ha$, will be discussed at the end of this section. For clarity, let us work with the Pauli matrices $\{\sx,\sy,\sz\}$, having commutation relations $[\sigma_i,\sigma_j] = 2 i \epsilon_{ijk} \sigma_k$.
We assume that the Hamiltonian of the system is controllable only along two directions, say $x$ and $z$, and can thus be written as:
\begin{equation}
H(t) = u_x(t) \sx + u_z(t) \sz.
\end{equation}
The field $\hb$, as discussed in Sec. \ref{sec:control}, will be directed along the unavailable direction. It can be calculated explicitly \cite{rice1,berry1} and it has the form $\hb(t) = \fb(t) \sy$, with:
\begin{equation}
\fb(t) = -\frac{1}{2} \frac{u_x(t) \partial_t u_z(t) - \partial_t  u_x(t) u_z(t)}{u_x(t)^2+u_z(t)^2}.
\end{equation}
To achieve compensation to first order in $T$ at stroboscopic times, we apply the E--CD Hamiltonian $\ha = c_x(t) \sx+c_z(t) \sz$, with control functions of the form \eqref{eq:contrfunc}. The first non-zero term of the Magnus exponent produced by $\ha$ is the second one, and has the desired matrix structure.   Generalizing to the choice of two arbitrary generators $\{\sigma_i,\sigma_j\}$, one has
\begin{equation} \label{eq:Ma2}
\ma^{(2)} (T) = - i\epsilon_{ijk} \frac{2\pi}{\omega^{2-2X}} \sum_{n=1}^L \frac{1}{n} \left[A_{i,n} B_{j,n}-B_{i,n} A_{j,n}  \right] \sigma_k .
\end{equation}
From Eq. \eqref{eq:Ma2} one can see that the highest order in $T$ is produced by the interplay of $\sin$ and $\cos$ components related to the same harmonic $n$, belonging to different control functions $c_i$ and $c_j$. There is no mixing between different harmonics, and no mixing between fields with the same phase. This feature is due to the properties of the integrals of the form $\int_0^T dt_1 \sin(n \omega t_1) \int_0^{t_1} dt_2 \cos(k \omega t_2)$.
Higher-order Magnus terms can be computed analytically, but the expressions get longer and longer with respect to Eq. \eqref{eq:Ma2}. 

Now, let us find the constraint equation to first order in $T$. For this purpose, two amplitudes are sufficient and, since we want $\ma^{(2)}$ to be of order $T$, we choose $X=1/2$ in Eq. \eqref{eq:contrfunc}.
A possible choice of control functions is then
\begin{equation}
c_z(t) = A \sqrt{\omega}\sin(\omega t); \quad c_x(t) = B \sqrt{\omega} \cos(\omega t).
\end{equation}
Using this functions, we can compute the first Magnus terms. Let $t_n=t_i+nT$, $n=1,2,\dots$, and let us indicate with $\int^T$ the integral over a full period $\int_{t_n }^{t_n+T}$. The exponents of the Magnus expansion for the CD and the effective dynamics at the end of each period $n T$ of the control functions are, respectively,
\begin{align}
\mb(T) & = -i \left( \int^T \fb(t) dt \right) \sy \nonumber, \\
& = -i [\fb(t_n+T/2) T + o(T^3)]\sy\label{eq:mb1},\\
\ma(T) & = -i AB T \sy + o(T^{3/2}) \label{eq:1ordsu2}.
\end{align}
The last equality in Eq. \eqref{eq:mb1} is obtained by formally Taylor-expanding $\fb(t)$ around the midpoint of the integration interval, $t_n+T/2$, and then integrating.
Equating order-$T$ terms in Eqs. \eqref{eq:mb1} and \eqref{eq:1ordsu2} one obtains the first constraint equation: $AB = \fb(t_n+T/2)$. This can be straightforwardly solved, and a possible solution for the $n$-th period is
\begin{equation}\label{eq:amplitudes1}
A = \sqrt{|\fb(t_n+T/2)|}, \quad B=\sign\big[\fb(t_n+T/2)\big] A,
\end{equation}
where $\sign(x)$ indicates the sign of $x$ and takes care of the case in which $\fb(t)$, and thus the product $AB$, is negative. Different solutions may be more suitable depending on the structure of $f_\text{CD}$.

The fact that the above solution for $A$ and $B$, Eq. \eqref{eq:amplitudes1}, is a simple function of $\fb$ evaluated at the midpoint of the period suggests a straightforward manner to interpolate the solutions in all intervals, obtaining continuous and smooth amplitude functions $A(t)$ and $B(t)$. It is indeed sufficient to replace $\fb(t_n+T/2) \to \fb(t)$ in Eq. \eqref{eq:amplitudes1} and the same accuracy is maintained. As a conclusion, the effective field can be written as
\begin{equation}\label{eq:1stACD}
\ha(t) = \sqrt{|\fb(t)| \omega} \Big[ \sign\{\fb(t)\}\cos(\omega t) \sx + \sin(\omega t) \sz \Big].
\end{equation} 
Two immediate observations can be made from Eq. \eqref{eq:1stACD}: First, $\ha$ is proportional to $\sqrt{\omega}$, so, as expected, a better sampling can be obtained at the cost of having a stronger field. Second, due to the proportionality with $\sqrt{|\fb(t)|}$, the effective CD field inherits to some extent the behavior of the exact one. In particular, it vanishes when $\hb$ does, that is, when the probability of nonadiabatic transitions is zero.

The third constraint equation, $\ma^{(3)}=0$, can be solved by adding a further term $-4 A \sin(2 \omega t) \sz$ to Eq. \eqref{eq:1stACD}. 

We conclude this section by discussing the more interesting case in which $H+\ha$ approximates $H+\hb$. This scenario is particularly important in a quasiadiabatic regime. When nonadiabatic effects are strong, the correcting fields become dominant with respect to the initial Hamiltonian $H$, and the method is highly nonadiabatic. In fact, even if the instantaneous eigenvectors of $H$ are tracked during the evolution, the system is not in an instantaneous eigenstate of the total (corrected) Hamiltonian which is actually applied, $H+\hb$ (or $H + \ha$). On the other hand, if the nonadiabatic effects are weak, then one can think of being close to true adiabaticity.

Depending on the problem, it might be convenient to move to the time-dependent interaction picture for computing the effective field. In general, though, it becomes more complicated to compute the terms in the Magnus expansion. In any case, the treatment above can be repeated, and to order $T^{3}$ in infidelity, one still obtains the same effective field as in Eq. \eqref{eq:1stACD}.

\section{Applications} \label{sec:examples}

In this section, we discuss three applications of the method introduced in Sec. \ref{sec:approximate}. The first one is the Landau-Zener-Majorana (LZM) model  \cite{landau,majorana,zener,nori}. This is the standard milestone in the study of nonadiabatic effects and more generally of the physics near avoided crossings in the energy spectrum.
The second one regards the adiabatic preparation of a two-qubit entangled state, a task of central importance for quantum information processing \cite{NielsenChuang}.
The last application generalizes the first one and deals with a three-level system whose natural evolution induces the formation of a sequence of avoided crossings \cite{Theisen2017}. Such a model can describe local parts of many-body nontrivial energy spectra.

We also benchmark the efficiency of the E--CD method against standard (finite time) adiabatic driving. This is done with a focus on the LZM case, it being the simplest nonadiabatic scenario. This analysis underpins the usefulness of the method not only for theoretical purposes but, more importantly, for practical applications in the control of quantum systems.

For setting up the comparison between the E--CD and the adiabatic paradigms, a quantification of their respective performance is first needed. Since this is not an obvious task, let us here introduce the criteria which are adopted for this purpose.
The principle figure of merit we are interested in is the infidelity $\If = 1-|\braket{\psi(t_f)\lvert gs(t_f)} \vert^2$, i.e., the probability that the final state $\ket{\psi(t_f)}$ of the system is not in the instantaneous ground state $\ket{gs(t_f)}$ at the end of the protocol. We are also interested in making the evolution as fast as possible with as few resources as possible, but there is not a unique way to quantify this. The general idea is that one should not only take into account the full duration of the protocol, but also the intensity of the Hamiltonian which is applied to the system. Indeed, the adiabatic theorem \cite{messiah1961qm,Jansen2007} (see Appendix \ref{appendix:proofs}) formally states a property of the solutions of the family of Schr\"odinger equations 
\begin{equation}\label{eq:rescaledSchr}
i\hbar \partial_s U_\tau(s) = \tau H(s) U_\tau (s),
\end{equation}
for varying $\tau$. Therefore, the theorem can be interpreted, on one side, as describing a varying-duration behavior, if Eq. \eqref{eq:rescaledSchr} is obtained after a rescaling $s=t/\tau$ of the physical time for a fixed Hamiltonian. Alternatively, it can be interpreted as describing an intensity-varying behavior, if Eq. \eqref{eq:rescaledSchr} is obtained from considering the duration fixed and amplifying the Hamiltonian like $H(t)\to \tau H(t)$.

These considerations lead us to study the infidelity $\If$ both as a function of the protocol duration and of the ratio between the strength of $\ha$ and that of the initial Hamiltonian $H(t)$. We formalize this by choosing, as a measure of strength $\mathcal S(\cdot)$, the maximal Frobenius norm over the whole evolution,
\begin{equation}\label{eq:strength}
\mathcal S(H) \equiv  \max_t \norm{H(t)}_F .
\end{equation}
The Frobenius norm is defined as $\norm{A}_F=\tr(A A^{\dagger})$  and provides a quantification of the magnitude of the whole Hamiltonian matrix. Different choices are possible as cost functions depending on the resource one is interested in. For instance, one might be practically interested in the maximal matrix element, or the maximal or average amplitude reached by the control functions.

With these tools at hand, let us now proceed to the discussion of the above mentioned applications.

\subsection{Landau-Zener-Majorana model} \label{sec:LZM}
\begin{figure}[t!]
\includegraphics[width=\linewidth]{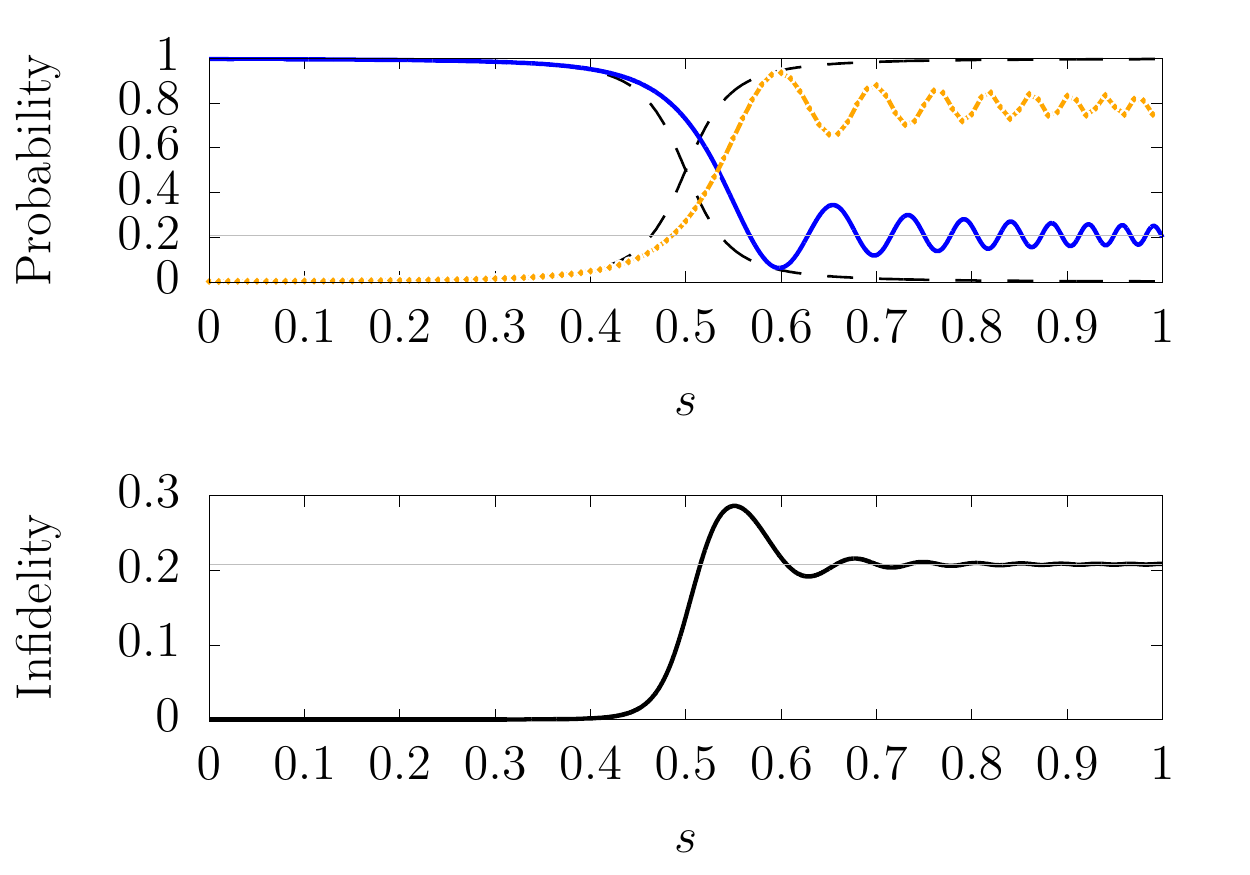}
\caption{Evolution given by the LZM Schr\"odinger equation \eqref{eq:lzham}, with parameters $\varepsilon = 20$, $\tau= \varepsilon$. Above, the populations of the bare states are depicted. The dashed lines represent the true instantaneous eigenstates, while the solid blue and dot-dashed orange lines represent the LZM evolution. Below, the infidelity, i.e., the probability of nonadiabatic transition, is shown. In both plots, the thin gray line represents the prediction given by the LZM formula, Eq. \eqref{eq:lzformula}.}
\label{fig:lzdynamics}
\end{figure}
We demonstrate the general approach detailed in Sec. \ref{sec:su2} for the specific case of the Landau-Zener-Majorana model \cite{landau,majorana,zener,nori}. Such a model describes a linear sweep of the gap between the energy levels of a two-level system. As they approach each other,
the presence of a coupling $\hbar \beta/2$ prevents a net crossing, and an avoided crossing is produced instead with minimal gap $\hbar \beta$. Assuming that the sweep spans an energy difference $\hbar \alpha$ in a time interval $t_i \le t \le t_f$, and that the anticrossing takes place at the intermediate time $t_c=(t_f -t_i)/2$, the Hamiltonian can be written in the form
\begin{equation*}
\hlz(t)/\hbar =  \frac{\alpha}{2} \frac{t-t_c}{t_f-t_i} \sz + \frac{\beta}{2} \sx. 
\end{equation*}
\begin{figure}[t]
\includegraphics[width=\linewidth]{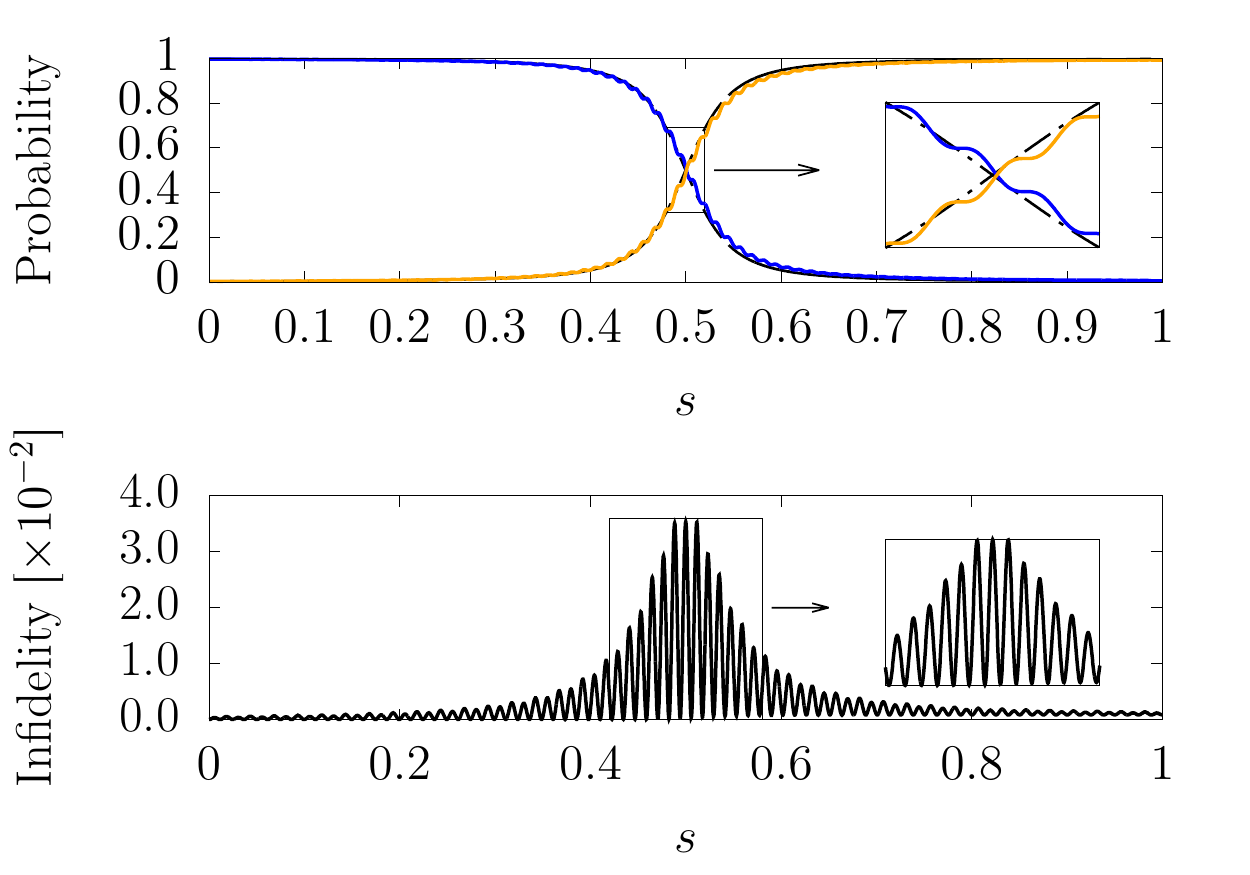}
\caption{Evolution given by the E--CD field, Eq. \eqref{eq:contham}, with parameters $\varepsilon=20$, $\tau=20$, $N_T= 2 \tau$. Above, the populations of the bare states are shown, with an inset zooming around the avoided crossing. The solid (orange/blue) lines represent the E--CD dynamics, which oscillates around the true adiabatic one, represented by dashed lines. Below, the infidelity, i.e., the probability of nonadiabatic transition, is shown.}
\label{fig:contdynamics}
\end{figure}
In order to study the adiabatic properties of the system, it is convenient to rescale the quantities in terms of $\beta$, which defines a fundamental frequency scale of the problem. Furthermore, let us parametrize the time according to $ s = \frac{t-t_i}{t_f-t_i}$, with $0 \le s \le 1$. In terms of the dimensionless quantities $s$, $\varepsilon = \alpha/\beta$ and $\tau=\beta (t_f-t_i)$, the Schr\"odinger equation becomes 
\begin{equation}\label{eq:lzham}
i \frac{\partial U(s)}{\partial s} = \frac{\tau}{2} \left[\varepsilon \left(s-\frac{1}{2}\right) \sz + \sx \right] U(s).\end{equation}
The parameter $\varepsilon$ determines the distance, in units of $\beta$, from the avoided crossing at the beginning and at the end of the protocol. It also rules the decay rate of the LZM oscillations in the transition probability, i.e., of nonadiabatic transition, after the anticrossing \cite{vitanov1,vitanovgarraway} (see also Appendix A in \cite{limberry}). Assuming that the system starts in the ground state, the transition probability to the excited state (i.e., of a nonadiabatic transition) is shown in Fig. \ref{fig:lzdynamics}. The asymptotic value, for large times, is given by the Landau-Zener formula \cite{zener,majorana,nori},
\begin{equation}\label{eq:lzformula}
P_\text{LZ} = \exp\left(- \frac{\pi \tau}{2 \varepsilon} \right).
\end{equation}
The CD field for this problem, satisfying $i \partial_s U = \tau H_\text{CD}(s) U$, is $H_\text{CD}(s) =  f_\text{CD}(s) \sigma_y$, with
\begin{equation*}
f_\text{CD}(s) = -\frac{1}{2 \tau} \frac{\varepsilon}{\varepsilon^2\left(s-\frac{1}{2} \right)^2+1}.
\end{equation*}
This CD protocol was experimentally studied in Refs. \cite{bason1,malossi1}.
From the results of the previous Sec. \ref{sec:su2}, Eq. \eqref{eq:1stACD}, the dimensionless effective CD field, satisfying $i\partial_s U = \tau H_\text{E}(s) U$,  is 
\begin{equation} \label{eq:contham}
H_\text{E}(s) = \sqrt{ |f_\text{CD}(s)| \omega}[-\cos(\omega s \tau) \sigma_x + \sin(\omega s \tau)\sigma_z ],
\end{equation}
with $\omega = 2 \pi/(\beta T)$.
The negative sign in front of the cosine is due to the fact that $\fb$ is always negative.

The dynamics given by $H$ alone can be compared with the one produced by $\ha$ from Figs. \ref{fig:lzdynamics} and \ref{fig:contdynamics}, for $\beta = 1$. As expected, the E--CD dynamics oscillates fast, remaining close to the target one. In the vicinity of the anticrossing, where nonadiabatic effects are strong, the deviation between consecutive periods is larger.

Let us now study the comparison between the E--CD and finite-time adiabatic methods more thoroughly. In particular, we show that the E--CD strategy can provide a consistent speed up of adiabatic protocols, characterizing the important regimes of parameters in terms of duration of the protocol and strength of the involved Hamiltonians. 
\begin{figure}[t]
\includegraphics[width=\linewidth]{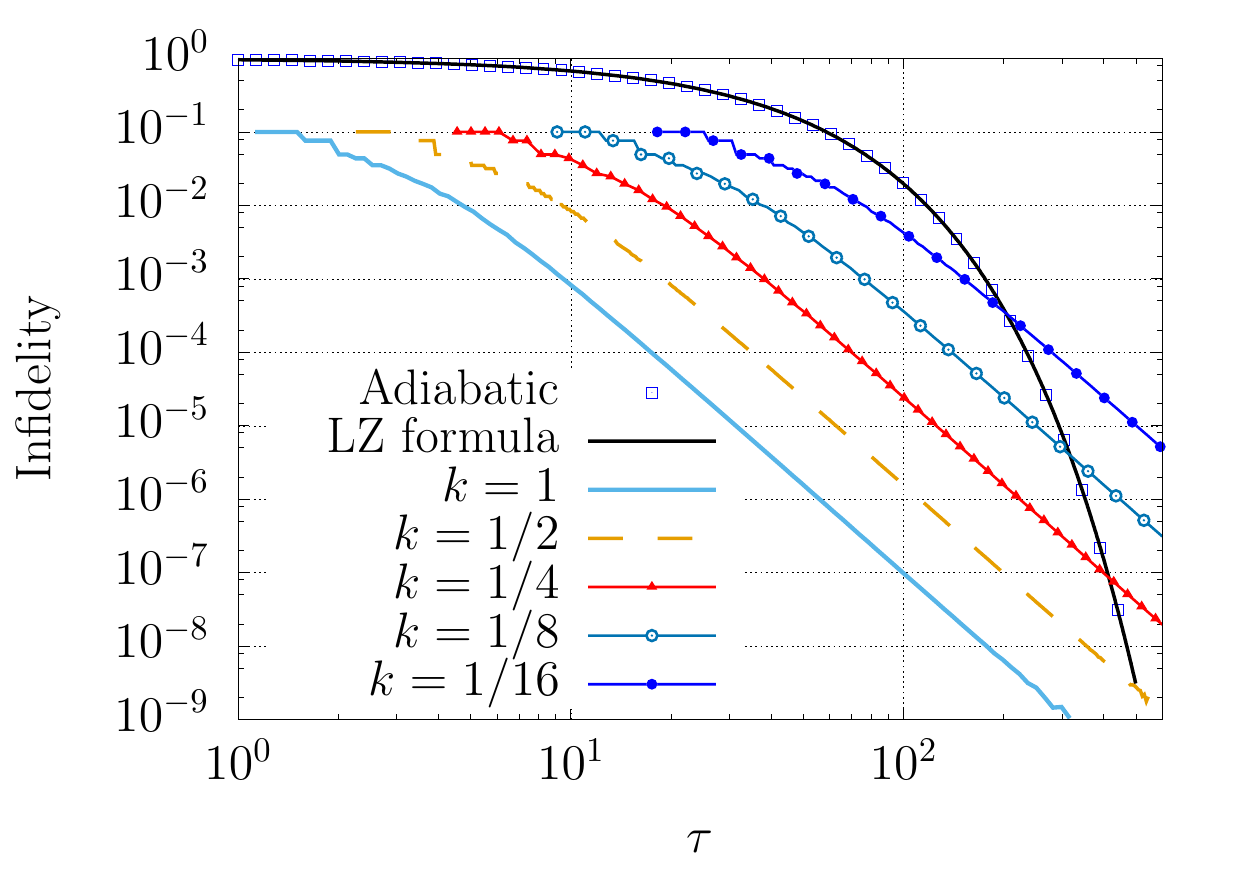}
\caption{Comparison between finite-time adiabatic dynamics and effective CD driving, using only $H_\text{E}$. Different (colored) line styles represent the results for different ratios $k$ in relation \eqref{eq:rationorms}. The empty squared points represent the adiabatic dynamics, whose behavior is well predicted by the LZ formula of Eq. \eqref{eq:lzformula} (black solid line). For each $\tau$, the number of periods $N_T$ is determined as the largest integer smaller than $\omega \tau/2 \pi$, once $\omega$ is chosen to be the largest such that Eq. \eqref{eq:rationorms} is satisfied. The conversion to integers explains the steps in the colored curves. The beginning of the latter is determined by the condition $N_T>1$.}
\label{fig:vstau}
\end{figure} 
 
Specifically, we compute the infidelity obtained with the E--CD Hamiltonian for different total durations $\tau$, after having selected the maximal frequency $\omega$ such that, for a certain factor $k$, the inequality
\begin{equation} \label{eq:rationorms}\mathcal S(\ha) \le k \mathcal S(H),
\end{equation}
with $\mathcal S$ defined in Eq. \eqref{eq:strength}, holds.

Let us remark that nonadiabatic transitions are exponentially weak in the adiabatic parameter \cite{dykhne1962,davis1976}. On the other hand, one expects that the efficiency of the E--CD method increases polynomially when raising the frequency $\omega$, since it is based on a perturbative argument. Therefore, one can safely predict that the adiabatic method, for very large $\tau$ and for a fixed maximal strength of the control fields, performs better. Nonetheless, what we are interested in is the intermediate range of durations, very far from the asymptotic limit in $\tau$, which is the regime where practical protocols need to work. 
\begin{figure}[t!]
\includegraphics[width=\linewidth]{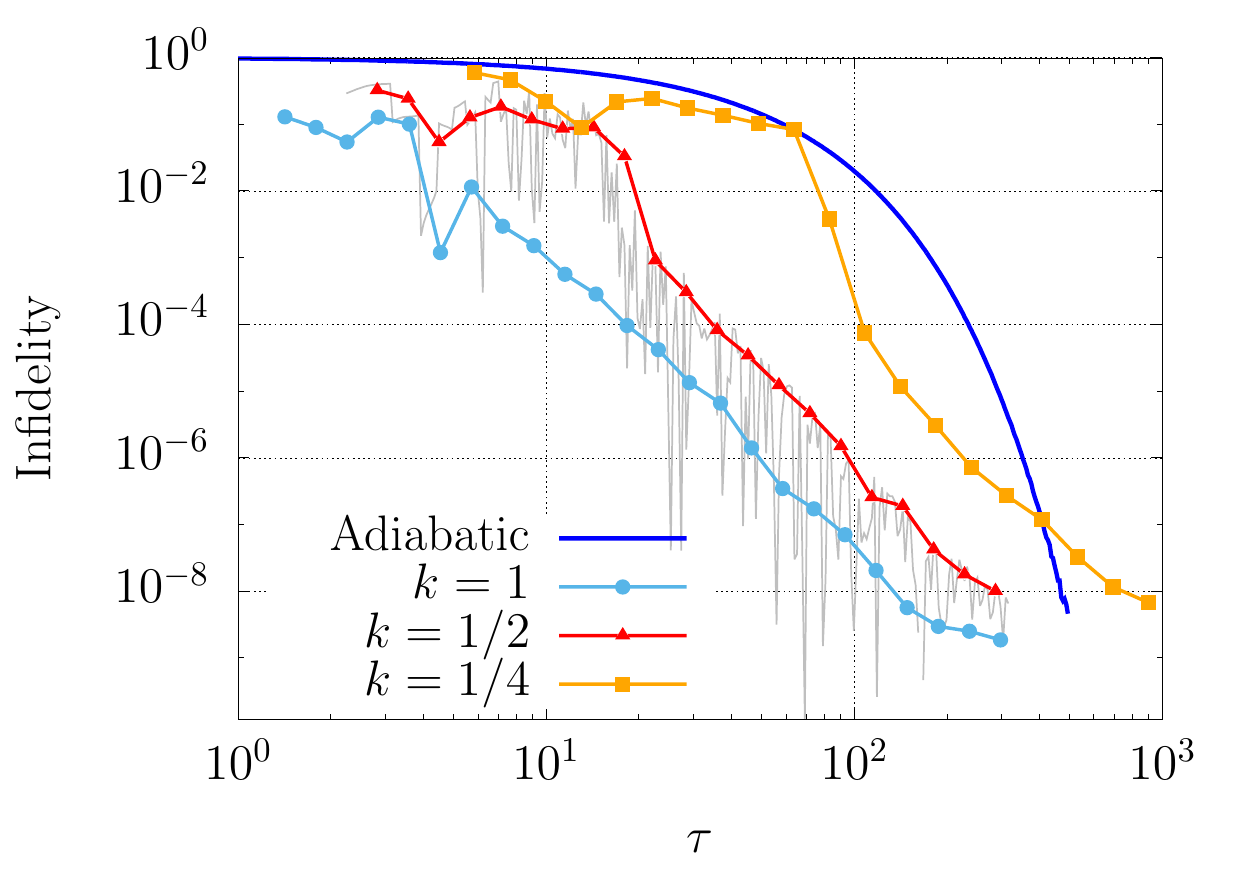}
\caption{The results obtained with the procedure described in Fig. \ref{fig:vstau}, shown for the application of the total Hamiltonian $H+\ha$. The solid line represents the adiabatic dynamics, while different line styles represent the E--CD method for different values of the factor $k$ in Eq. \eqref{eq:rationorms}, i.e., for different strength of $\ha$ as compared to $H$. The infidelity oscillates fast for different $\tau$, as shown by the thin gray line for the case $k=1/2$, so the depicted points are representing an average of 20 surrounding points.}
\label{fig:vstauontop}
\end{figure}

The results for the LMZ model are shown in Figs. \ref{fig:vstau} and \ref{fig:vstauontop}. In Fig. \ref{fig:vstau}, the case in which only $\ha$, as a stand-alone shortcut to adiabaticity, is applied is reported. As expected from the above discussion, the infidelity under effective CD driving scales like a power-law for large durations $\tau$, and eventually the adiabatic exponential decay takes over. However, in the intermediate range of $\tau$, the effective method does permit us to achieve an important speed-up of the adiabatic process, for fixed infidelity. The range of improvement gets wider and wider as one allows a stronger correcting field, while still remaining below the maximal norm of $H$. For example, one can reach a fidelity of $\sim 99.9\%$ twenty times faster using an E--CD field of the \emph{same} strength of $H(t)$ [$k=1$ in Eq. \eqref{eq:rationorms}].
Figure \ref{fig:vstauontop} shows the same results for the case in which the E--CD Hamiltonian is implemented beside $H(t)$, and thus acts as a weak correcting field to the major dynamics induced by $H(t)$. The final infidelity oscillates strongly for different values of $\tau$ (thin gray line), so the points indicate an average of 20 surrounding points. The oscillations can bring the infidelity to much lower values, but never much higher than those represented by the points. This method performs worse in numbers than the previous case (Fig. \ref{fig:vstau}) for small $\tau$, but its main advantage is that it allows one to preserve true adiabaticity to a certain extent. That is, the driven states are close to being instantaneous eigenstates of the full Hamiltonian $H+\ha$, since $\ha$ is weaker than $H$ and averages to zero in time. One can see that there is an initial regime in which the protocol is too fast for the correction to give a significant improvement, and a second regime in which the infidelity follows essentially the same behavior as in Fig. \ref{fig:vstau}.
As an example, an infidelity of at least $\sim 10^{-4}$ can be achieved with a speed up of around $\sim 6.5$ times, using a maximal strength of the correcting field satisfying Eq. \eqref{eq:rationorms} with $k=1/2$. A speed up of $2.2$ times is obtained in the case $k=1/4$ for the same infidelity.
All these results confirm that the E--CD method represents an efficient strategy for obtaining fast adiabatic evolution, without requiring strong drivings.

The second criterium for comparing adiabatic and effective CD driving is the following: we consider the total duration to be fixed, $0 \le s \le 1$, and we compare the dynamics produced by the amplified Hamiltonians $H_\tau(s)$ with the one produced by $\ha$ for increasing values of the frequency $\omega$. Due to the intertwined relation between duration and amplitude, we choose as a basis for comparison the integral of the norm over the whole evolution
\begin{equation}\label{eq:intfrob}
\int_{t_i}^{t_f} dt \norm{H_\text{X}(t)}_F .
\end{equation}
The results are shown in Fig. \ref{fig:intnorm}. As for the previous criterium adopted, they are rather promising, confirming the validity of the E--CD method. Indeed, the E--CD Hamiltonian turns out to always produce a better infidelity, for a given total integral norm of the Hamiltonian, in the range of values studied.

\begin{figure}
\includegraphics[width=\linewidth]{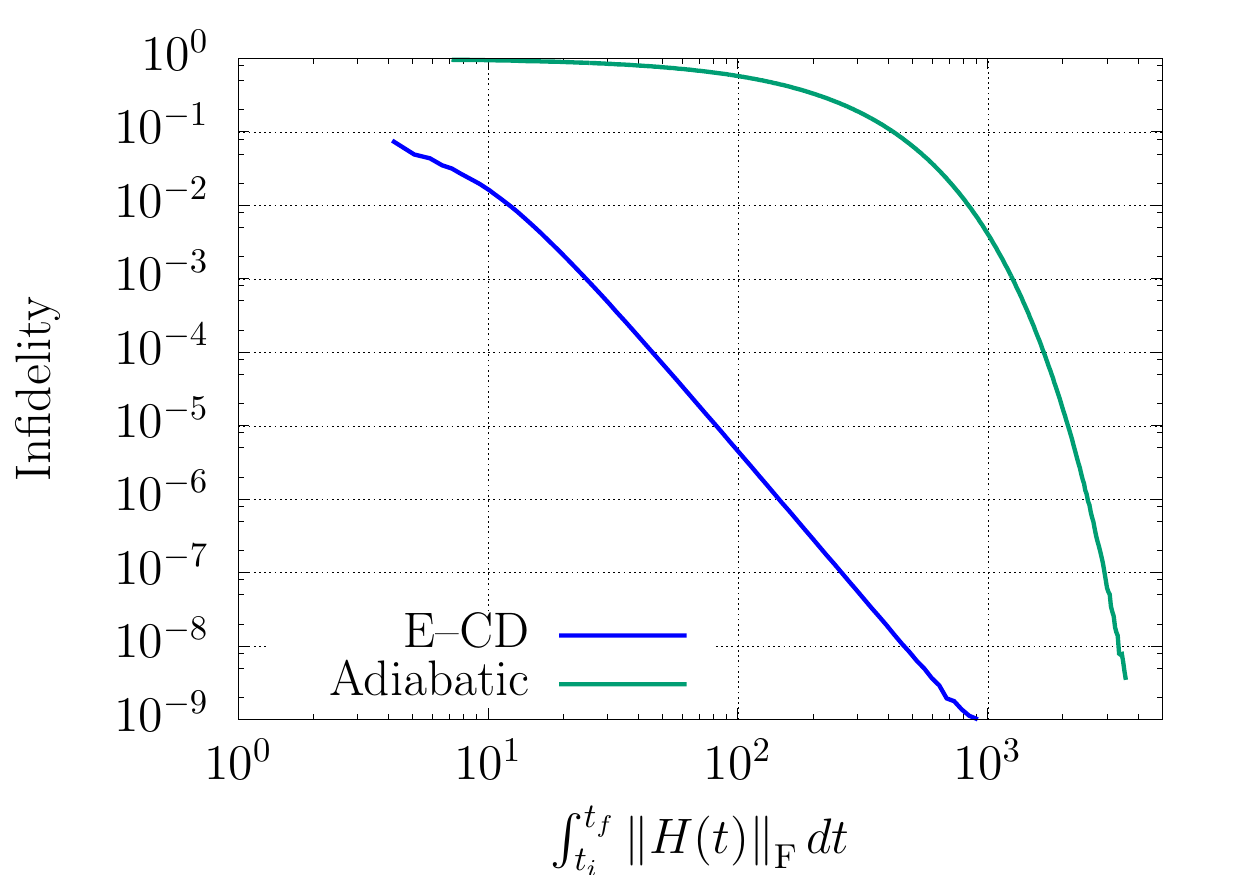}
\caption{Comparison between finite-time adiabatic and E--CD driving in terms of the infidelity as a function of the time integral of the Frobenius norm of the respective Hamiltonians [Eq. \eqref{eq:intfrob}], for parameters $\varepsilon=40, \beta=1$.  The adiabatic curve (upper line) is obtained by raising the amplifying factor (corresponding to the total duration) $\tau$ of the original Hamiltonian, while the E--CD curve (lower line) is obtained by keeping the initial value of $\tau$ fixed, while raising the frequency $\omega$. The quantity on the abscissa is dimensionless, having set $\hbar=1$.}
\label{fig:intnorm}
\end{figure}
The robustness of the method against static errors in the parameters, namely amplitude and relative phase of the fields, is discussed in Appendix \ref{appendix:robustness}. The results show that the method is not very sensitive to errors: it behaves linearly with respect to amplitude noise and quadratically with respect to phase noise.

\subsection{Two-qubit entanglement creation} \label{sec:2qub}
\begin{figure}
\includegraphics[width=\linewidth]{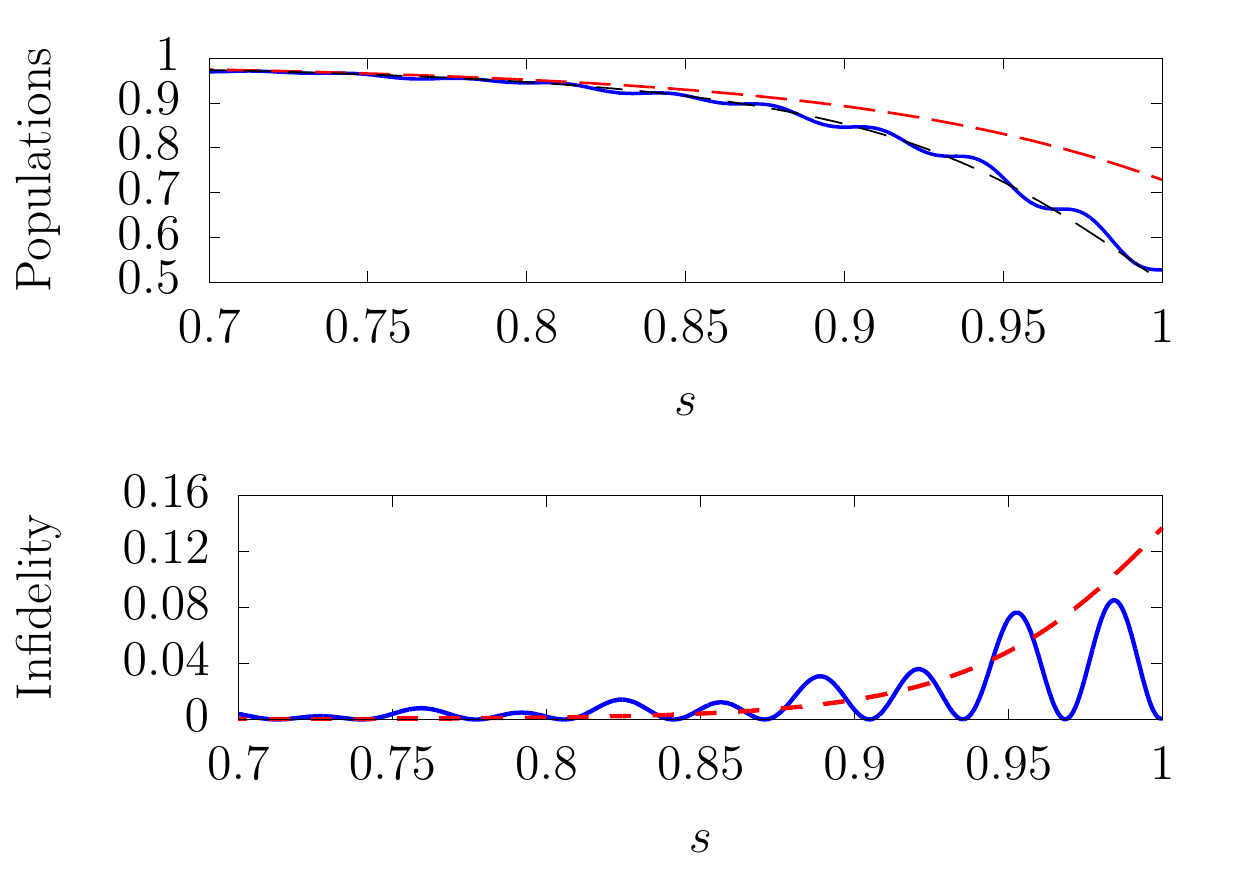}
\caption{Preparation of a two-qubit entangled state via the E--CD method (solid blue), compared against standard adiabatic (dashed red). The results are zoomed in the rescaled--time interval $0.7\le s\le 1$ for better visibility and the parameters are $\varepsilon = 5$, $\tau = 5$, $T=1/2$. Above, the evolution of the population of the bare level $\ket{00}$ is shown, with the black dashed line indicating the target dynamics. The two eigenstates $(\ket{01}\pm\ket{10})/\sqrt{2}$ are never populated, while the state $\ket{00}$ is adiabatically converted into the entangled superposition $(\ket{00}+\ket{11})/\sqrt{2}$. Below, the infidelity is represented, reaching a final value of $\sim 0.1$ for adiabatic driving and $1.7 \times 10^{-3}$ for E--CD driving. The total speed up is of $\sim 12$ times using an E--CD field $\ha$ with maximal strength satisfying $\mathcal S(\ha) = \frac{1}{2} \mathcal S(H)$ [see Eq. \eqref{eq:rationorms}].}
\label{fig:2qub}
\end{figure}
The ability to reliably produce and manipulate entanglement is at the core of quantum information processing \cite{NielsenChuang}. High-fidelity entanglement preparing protocols are thus a central touchstone for the development of quantum technologies. These are typically difficult to realize, especially due to the necessity of strong interactions and the ability of controlling many of them. 

Here, we use the E--CD method for the high-fidelity adiabatic preparation of a two-qubit entangled Bell state, requiring time-dependent control of local terms and one two-body interaction. 
Specifically, let us consider the Hamiltonian
\begin{align}
H(s) & = B(s) H_1 + g(s) H_2 \label{eq:2qubham},\\
& = - B(s) (\sz^{(1)}+\sz^{(2)}) - g(s) (\sx^{(1)} \sx^{(2)} + \sz^{(1)} \sz^{(2)}) \nonumber.
\end{align}
The Hamiltonian $H_1$ is local, while $H_2$ has entangled eigenstates. Varying the field $B(s)$ from high values to very small values, while keeping $g$ fixed, the adiabatic path of the ground state connects a separable state $\ket{00}$ to an entangled Bell state $\ket{\psi_+}=\frac{1}{\sqrt{2}}\left(\ket{00}+\ket{11}\right)$. In particular, we fix $g=1$, choose the simple time dependence $B(s) = \varepsilon(1-s)$, and consider a total duration $\tau$ of the protocol.

The dynamical Lie algebra $\mathcal L$ of the system has dimension four and is isomorphic to $\mathfrak{u}(2)$, so the system is not fully controllable, since $\mathcal L \subset \su(4)$. However, from Theorem \ref{theorem1}, we already know that the Hamiltonian $\hb$ will be inside $\mathcal L$, and will thus be a combination of the four basis elements. Since we know from Sec. \ref{sec:cddriving} that $\hb$ is orthogonal, with respect to the Hilbert-Schmidt inner product, both to $H(t)$ and its time derivative, we also know that it will not have components along $H_1$ and $H_2$. Indeed, $\hb$ can be computed analytically (Appendix \ref{appendix:diag}) and turns out to have the form
\begin{equation}
\tau \hb(s) = \frac{1}{2} \frac{\varepsilon}{4 \varepsilon^2 \left(1-s\right)^2+1} H_3 = \tau \fb(s) H_3,
\end{equation}
where $H_3= \sx^{(1)}\sy^{(2)}+\sy^{(1)}\sx^{(2)}$ is a hopping term. Therefore, $\hb$ would require the time-dependent control of an extra two-body interaction. 

In order to avoid the additional implementation of $H_3$, we can use the results of Sec. \ref{sec:su2}, since $H_3 \propto [H_1,H_2]$. We thus apply an E--CD field of the form $\ha(s) = c_1(s) H_1 + c_2(s) H_2$, with control functions
\begin{align*}
c_1 (s) = -\sqrt{  \lvert \fb(s)\rvert \omega} \cos(\omega s \tau),\\
c_2(s) = \sqrt{ \lvert \fb(s)\rvert \omega} \sin(\omega s \tau).
\end{align*}
The outcomes of a numerical simulation with parameters $\varepsilon =5$, $\tau = 5$, $N_T = \omega \tau/2 \pi = 10$ are reported in Fig. \ref{fig:2qub}. A final infidelity of $1.7 \times 10^{-3}$ is produced with a speed up of $\sim 12$ times, for $\ha$ having maximal strength $\mathcal S(\ha) \le \frac{1}{2} \mathcal S(H)$ [see Eq. \eqref{eq:strength}]. These results are extremely favorable, showing that the E--CD method provides a concrete advantage as a quantum control tool for a difficult and important task such as entanglement creation.
\subsection{Three-level system}
We apply now the E--CD scheme to the case of a three-level system undergoing a sequence of LZM avoided crossing. 
The calculation and application of the exact CD field was discussed in detail in Ref. \cite{Theisen2017}. The Hamiltonian, in terms of dimensionless quantities, reads
\begin{equation} \label{eq:3levham}
H(s) = \begin{pmatrix}
d + \varepsilon \left(s -\frac{1}{2}\right) & 1 & 0 \\
1 & -2 d & 1 \\
0 & 1 & d - \varepsilon \left(s -\frac{1}{2} \right)
\end{pmatrix}.
\end{equation}
\begin{figure}
\includegraphics[width=\linewidth]{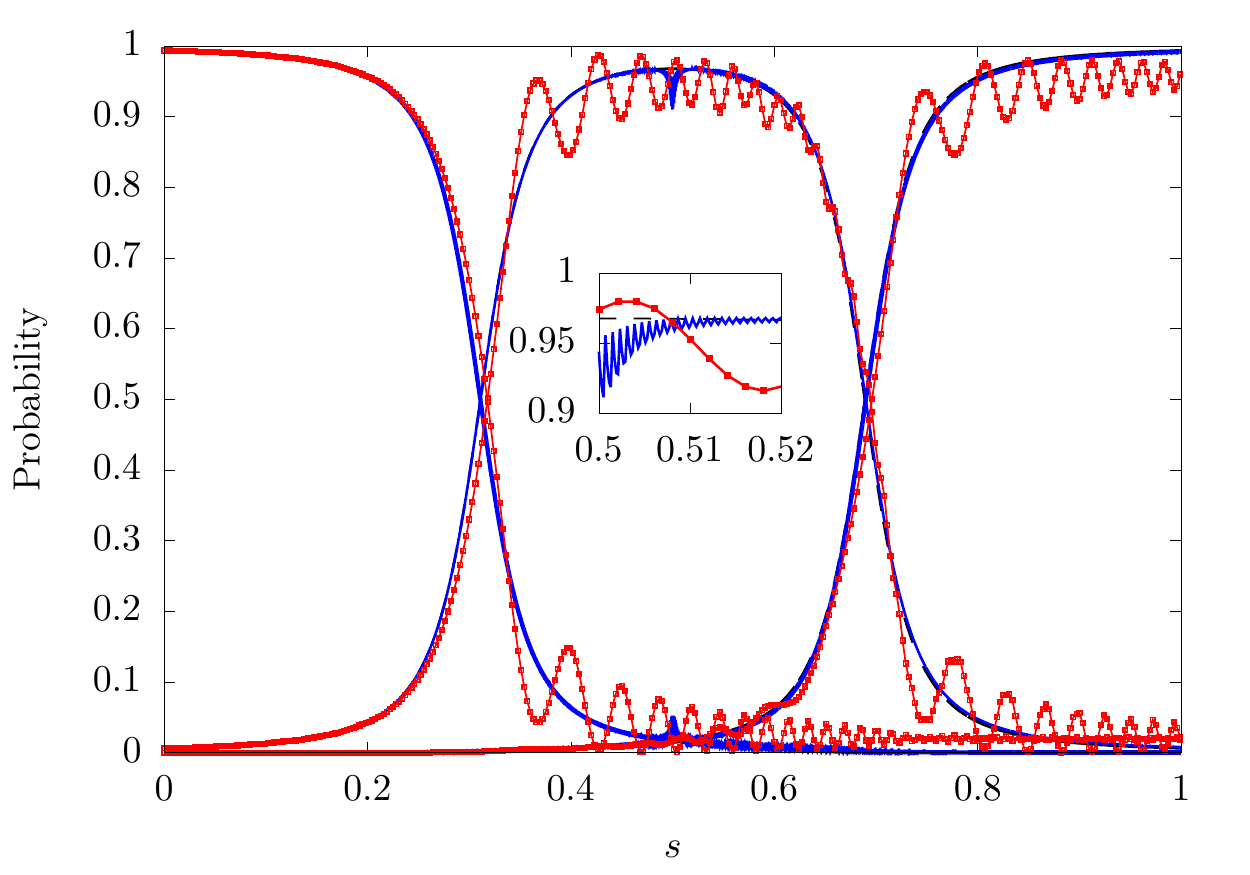}
\caption{Comparison of E--CD and adiabatic methods for the three-level system with Hamiltonian of Eq. \eqref{eq:3levham} and parameters $\tau=25$, $\varepsilon = 40$, $d = 5/2$. The evolution of the populations of the bare states is shown in dotted-solid red for adiabatic driving, by the blue solid line for E--CD driving, in dashed black for true (target) adiabatic dynamics. The inset highlights the oscillation of the E--CD dynamics around the target one.}
\label{fig:3lev}
\end{figure}
This Hamiltonian can physically describe an effective spin-1. For example, it could describe a local part of the energy spectrum of a molecular nanomagnet \cite{collison1}, which is subject to a constant magnetic field along the $x$ direction, a linearly sweeping magnetic field along the $z$ direction, and an axial zero-field splitting \cite{Theisen2017}.
The Hamiltonian is real, so we are in the situation of Theorem \ref{theorem2}.
The CD field $\hb$ cannot be computed analytically for all times, but has the general form 
\begin{equation}
\hb = \begin{pmatrix}
0 & -i \fb^{(12)} & -i\fb^{(13)} \\
i \fb^{(12)} & 0 & -i \fb^{(23)} \\
i \fb^{(13)} & i \fb^{(23)} & 0 
\end{pmatrix},
\end{equation}
with $\fb^{(12)}$ and $\fb^{(23)}$ always negative, while $\fb^{(13)}$ assumes also positive values.
The effective field is chosen of the form 
\begin{equation}
\ha = \sqrt{\omega} \begin{pmatrix}
c_3(t) & c_1(t) & 0 \\
c_1(t) & 0 & c_2(t) \\
0 & c_2(t) & -c_3(t)
\end{pmatrix}.
\end{equation}

The control functions $\bm{c}(t)$ are chosen such that the constraint equations given by $\mb^{(1)} = \ma^{(2)}$ can be easily solved analytically, ensuring precision in infidelity at least to order $T^{3}$ at the end of one period. This is done by recalling, as observed in Sec. \ref{sec:su2}, that only sines and cosines of the same harmonics contribute to the first set of constraint equations. They read
\begin{align*}
c_1(t)  = & A \cos( \omega t) - B \cos(2 \omega t) ,\\
c_2(t) = & C \sin( \omega t) - D \cos(3 \omega t) ,\\
c_3(t)  =& B\sin(2 \omega t) + D \sin(3 \omega t).\\
\end{align*} 
The constraint equations are
\begin{subequations} \label{eq:subeq}
\begin{equation}
\frac{1}{4} T B^2  = - \int^T \fb^{(12)}(t) dt ,\end{equation}
\begin{equation}
\frac{1}{6} T D^2  = - \int^T \fb^{(23)}(t) dt ,
\end{equation}
\begin{equation}
\frac{1}{2} T A C = -\int^T \fb^{(13)}(t) dt .
\end{equation}
\end{subequations}
Approximating the integrals with the value of the integrated functions multiplied by $T$, as explained in detail in Sec. \ref{sec:su2}, an interpolated solution for Eqs.
\eqref{eq:subeq} is given by
\begin{align}
A(t) &= -\sign [\fb^{(13)}] \sqrt{2 \left\lvert \fb^{(13)}\right\rvert},  \quad B(t) = 2 \sqrt{\left\lvert\fb^{(12)}\right\rvert}, \nonumber \\
C(t) &= \sqrt{2\left\lvert \fb^{(13)} \right\rvert}, \quad D(t) = \sqrt{6 \left\lvert\fb^{(23)}\right\rvert}.
\end{align}
The results with parameters $\tau=25$, $\varepsilon=40$, $d = 5/2$ are shown in Fig. \ref{fig:3lev}. A speed up of 2.5 times is obtained, for equal strength $\mathcal S(\ha) = \mathcal S(H)$ of the Hamiltonians. This shows that the E--CD paradigm behaves well also in a situation where more complex nonadiabatic phenomena are present in the dynamics of the quantum system.

\section{Conclusion}\label{sec:conclusion}

We have introduced and discussed a method for speeding-up adiabatic quantum state transfer without requiring the introduction of new Hamiltonian components, while remaining always close to the true adiabatic path. 
This is achieved by introducing a control Hamiltonian which simulates in real time the effect of a counterdiabatic field \cite{rice1,berry1}, but can be adsorbed into the initial Hamiltonian $H(t)$.
As a trade-off, complete control of $H(t)$ is required. However, the algebraic results of Sec. \ref{sec:control} can be exploited to limit the number of independent control Hamiltonians needed in $H(t)$ to implement the control protocol. For instance, we have shown in Sec. \ref{sec:examples} that, by encoding a two-level problem into a two-qubit framework, a two-qubit entangled state can be prepared with high fidelity in short time, requiring control of a single two-body interaction and a local field. 

Effective counterdiabatic Hamiltonians have been computed for the prototype applications studied, by choosing control functions which produce simple constraint equations for the amplitudes; see Secs. \ref{sec:su2} and \ref{sec:examples}. Remarkably, this allowed us to solve the equations analytically, and to obtain satisfying results without resorting to heavy numerical methods. Nonetheless, the structure of the E--CD problem still leaves room for optimization of the parameters, so further improvement can be predicted by the hybridization with optimal control strategies \cite{Glaser2015}. Moreover, 
one could combine the E--CD method with a preselection of an optimal adiabatic protocol. For instance, one might first conceive a ``local'' adiabatic driving \cite{Roland2002}, in which the evolution is accelerated where nonadiabatic effects are small and then slowed down in the vicinity of avoided crossings. This would provide an \emph{a priori} improvement of the basic adiabatic protocol with respect, for example, to a simple LZM sweep. Such a protocol could then be ulteriorly sped up by means of E--CD corrections.

In particular, the simplicity and versatility of the E--CD method makes it a strong strategy for the control 
of adiabatic processes. Therefore, we expect it to be of practical interest for many applications, ranging from quantum state preparation to adiabatic quantum computing \cite{fahri2000,albash2016}, quantum thermodynamics \cite{Alicki2018,Campo2014}, and probing of quantum critical dynamics \cite{Greiner2002,Polkovnikov2005,delcampo2012,manybody2014}.

The method has been developed for finite-dimensional quantum systems. Still, the scheme could in principle be applied to infinite-dimensional systems with discrete levels, or restricted to well-separated nonpathological parts of the spectrum. Pathological here refers to possible degeneracies or continuous spectra. The first case, which is problematic also for finite dimensions, would require addressing the decoupling of instantaneous eigensubspaces, rather then single instantaneous eigenvectors. On the other hand, continuous spectra would require a revision of the CD framework, starting from the adiabatic theorem itself  \cite{Maamache2008}. Concerning the more mathematical results, noncompactness of the dynamical Lie group of the system would break some of the assumptions used in our control-theoretic setting \cite{Genoni2012}, which should then be investigated more thoroughly.

\acknowledgments

F.P. is grateful for hospitality at Imperial College London and acknowledges partial funding from the Doctoral International Mobility Program of the University of Parma. B.D. acknowledges funding from the Engineering and Physical Sciences Research Council (UK) administered by Imperial College London via the Postdoctoral Prize Fellowship program. The authors are grateful to Riccardo Mannella for a critical reading of the manuscript.

\appendix

\section{Proof of Theorem \ref{theorem1}} \label{appendix:proofs}

Before proceeding with the proof of Theorem \ref{theorem1}, and to fix notation, let us recall the basic formulation of the adiabatic theorem.

Considering the Schr\"odinger equation, $ i \hbar \partial_t U(t)=H(t) U(t)$, from initial time $t_i$, let us parametrize the physical time $t$ according to $t-t_i=s \tau$, $ 0\le s \le 1$, where $\tau=t_f - t_i$ is the total evolution time. The equation becomes 
\begin{equation}\label{adeq}
i \hbar \partial_s U_\tau(s)=\tau H (s) U_\tau(s) .
 \end{equation}
\begin{thm}{\normalfont(Adiabatic theorem \cite{messiah1961qm,kato1950})}
Let $U_\tau(s)$ be the solution of \eqref{adeq}. Then,
\begin{equation} \label{adtheo1}
U_\tau(s)  - \sum_n e^{-\frac{i}{\hbar} \tau\int_0^s E_n(t(s')) ds' }\ket{n(t(s))}\bra{n(0)}  = O(\tau^{-1}),
\end{equation}
\end{thm}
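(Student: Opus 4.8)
The plan is to prove the statement by passing to the adiabatic (instantaneous-eigenbasis) frame and showing that the residual nonadiabatic propagator differs from the identity only at order $\tau^{-1}$. I fix a smooth choice of eigenvectors $\ket{n(s)}$ obeying the parallel-transport gauge $\langle n(s)|\partial_s n(s)\rangle=0$, so that the dynamical-phase propagator appearing in \eqref{adtheo1}, namely $\Phi_\tau(s)=\sum_n e^{-\frac{i}{\hbar}\tau\int_0^s E_n(s')\,ds'}\ket{n(s)}\bra{n(0)}$, is exactly the leading adiabatic evolution; in a general gauge one would have to reinstate the $O(1)$ geometric phase, which is why parallel transport is the natural setting for the statement as written. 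A short computation, entirely parallel to the reverse-engineering identity $\hcd=i\hbar\partial_t\ucd\ucd^{\dagger}$ of Sec.~\ref{sec:cddriving}, gives the generator of $\Phi_\tau$ as $i\hbar\,\partial_s\Phi_\tau\,\Phi_\tau^{\dagger}=\tau H(s)+\hb^{(s)}$, where $\hb^{(s)}=\tau\hb=i\hbar\sum_n\ket{\partial_s n}\bra{n}$ is the counterdiabatic field written in the rescaled time and is manifestly $O(\tau^0)$.

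Next I would introduce the residual $W(s)=\Phi_\tau^{\dagger}(s)U_\tau(s)$, with $W(0)=\mathbb{1}$, and combine \eqref{adeq} with the generator above to obtain $i\hbar\,\partial_s W=-\Phi_\tau^{\dagger}\hb^{(s)}\Phi_\tau\,W=:K(s)W$. The key structural fact is that the driving $K(s)$ is built entirely from $\hb^{(s)}$, i.e.\ from the single power of $\tau$ by which the true Hamiltonian $\tau H$ fails to generate the adiabatic flow. Evaluating matrix elements in the frozen basis $\{\ket{n(0)}\}$ yields $\langle m(0)|K(s)|n(0)\rangle=-\,i\hbar\,e^{-\frac{i}{\hbar}\tau\int_0^s (E_n-E_m)\,ds'}\,\langle m(s)|\partial_s n(s)\rangle$. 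The parallel-transport gauge kills the diagonal ($m=n$) entries, while for $m\neq n$ one rewrites $\langle m|\partial_s n\rangle=\langle m|\partial_s H|n\rangle/(E_n-E_m)$, a bounded smooth function, multiplied by a fast phase whose rate is set by the spectral gap.

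The $\tau^{-1}$ scaling then comes from the oscillatory integral. Writing $W$ in Volterra form, $W(s)=\mathbb{1}-\frac{i}{\hbar}\int_0^s K(s')W(s')\,ds'$, the leading contribution is $\int_0^s K(s')\,ds'$; integrating by parts in each off-diagonal entry, using that the phase derivative $-\frac{i}{\hbar}\tau(E_n-E_m)$ never vanishes, produces an explicit prefactor $\hbar/[\tau(E_n-E_m)]$ in both the boundary term and the leftover integral, so $\int_0^s K=O(\tau^{-1})$ uniformly in $s$. A Gronwall (or Dyson-series) estimate then upgrades this to $W(s)=\mathbb{1}+O(\tau^{-1})$, whence $U_\tau(s)=\Phi_\tau(s)W(s)=\Phi_\tau(s)+O(\tau^{-1})$, which is exactly \eqref{adtheo1}.

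The main obstacle is not the algebra but making these estimates rigorous and uniform. Integration by parts requires a uniform spectral gap, $\min_{s}\min_{m\neq n}|E_m(s)-E_n(s)|>0$, precisely the nondegeneracy hypothesis flagged before the theorem; it also requires enough regularity of $H(s)$ (at least $C^2$) so that the differentiated prefactor appearing in the remainder stays bounded. One must then control the boundary terms at the endpoints and verify that reinserting $W$ self-consistently into the Volterra equation does not spoil the order, which is where the Gronwall bound does the work. Sharper statements—Kato/Nenciu-type expansions to all orders in $\tau^{-1}$, or the exponential estimates available in the analytic case—would refine this, but for the $O(\tau^{-1})$ claim a single integration by parts under the gap condition suffices.
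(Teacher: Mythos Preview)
The paper does not actually prove this statement: the adiabatic theorem is merely \emph{recalled} in Appendix~\ref{appendix:proofs} as a classical result, with citations to Messiah and Kato, in order to fix notation and to be invoked in the proof of Theorem~\ref{theorem1}. There is therefore no ``paper's own proof'' to compare against.

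That said, your proposal is the standard Kato-style argument and is essentially correct. Passing to the adiabatic frame via $W=\Phi_\tau^\dagger U_\tau$, showing that the residual generator $K$ is purely off-diagonal in the parallel-transport gauge with fast phases $e^{-\frac{i}{\hbar}\tau\int(E_n-E_m)}$, and extracting the $\tau^{-1}$ by a single integration by parts under a uniform gap is exactly the textbook route. One small point worth tightening: the step ``a Gronwall estimate then upgrades this to $W=\mathbb{1}+O(\tau^{-1})$'' is not quite a Gronwall application in the usual sense, since $\|K\|$ itself is $O(1)$ and a naive Gronwall bound would be useless. What actually works is to integrate by parts directly on $\int_0^s K(s')W(s')\,ds'$ (not just on $\int_0^s K$), using that $W$ is unitary and that $\partial_s W=-\tfrac{i}{\hbar}KW$ is uniformly bounded; the boundary and remainder terms then each carry an explicit $\tau^{-1}$. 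You already flag the required hypotheses (uniform gap, $C^2$ regularity of $H$) correctly.
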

In the adiabatic limit $\tau\to \infty$, and for $s=1$, it holds exactly
\begin{equation}\label{eq:adapprox}U_\tau(1) \overset{\tau\to \infty} {\longrightarrow} U_\text{ad}(t_i,t) \equiv   \sum_n e^{-\frac{i}{\hbar}\int_{t_i}^t E_n(t') dt' }\ket{n(t)}\bra{n(0)}.
\end{equation}

Under the control-theoretic setup introduced in Sec. \ref{sec:control}, we proceed with the proof of Theorem \ref{theorem1}. 

\begin{proof} \normalfont
Let $\hat{H}_\text{CD} = \hb(\,\hat{t} \,)$ be the time-independent matrix obtained by evaluating $\hb(t)$ at the time instant $t=\hat{t}$.
In order to prove that $\hb \in \mathcal L$, one can equivalently show that either {(i)} $\hb$ can be written at all times as a linear combination of elements of a basis of $\mathcal L$, or {(ii)} that the group elements generated by the matrices $\hat{H}_\text{CD}$ for all time instants $\hat t$, $e^{-i \hat{H}_\text{CD} t}$, can be written as a concatenation of elements of the group $e^{\mathcal L}$. We follow route {(ii)}. 
With the initial time $t_i$ set to zero and $\hbar$ set to one for simplicity of notation, let $\ub(t)$ be the solution of the Schr\"odinger equation 
$$i  \partial_t \ub(t) = \hb(t) \ub(t) = \big[\hcd(t)-H(t)\big] \ub(t),$$
where the Hamiltonian $H_\text{corr}$ was introduced in Sec. \ref{sec:cddriving}. Let us assume that a Magnus expansion (see Appendix \ref{appendix:Magnus}) $\ub(t)=e^{\mb(t)}$ can be formally written, where
\begin{align}
\mb(t) & = -i \int_0^t (\hcd(t_1)-H(t_1)) d t_1 \nonumber\\
&-\frac{1}{2} \int_0^t dt_1 \int_0^{t_1} d t_2 [\hb(t_1), \hb(t_2)]+ \dots \label{eq:MB}
\end{align}
Now, calling $\mc(t)$ and $M(t)$ the Magnus exponents generated by $\Hcd$ and $H$ respectively, the terms in \eqref{eq:MB} can be rearranged as to give
\begin{align}
\mb(t) & = \mc(t) - M(t) + R(t),
\end{align}
with
\begin{align*}
R(t) = & -\frac{1}{2} \int_0^t dt_1 \int_0^{t_1} dt_2 [\Hcd(t_1), H(t_2)] \\
&  -\frac{1}{2} \int_0^t dt_1 \int_0^{t_1} dt_2 [H(t_1), \Hcd(t_2)] + \dots
\end{align*}
By expanding in Taylor series around the midpoint $t^*=t/2$ of the (small) integration interval $(0,t)$, see Eq. \eqref{eq:magnusmix} in Appendix \ref{appendix:Magnus}, one can write 
$$R(t) = -\frac{1}{12} \left( [\hcd^{(1)},H^{(0)}]-[\hcd^{(0)},H^{(1)}] \right) t^3 + o(t^4),$$
where we have used the notation 
$$H_X^{(k)} = \frac{1}{k!} \left.\frac{\partial^{k} H_\text{X}(t)}{\partial^k t}\right\lvert_{t^*}.$$
Therefore $R(t)=o(t^3)$.

Now, considering small times $t$, and using repeatedly the Zassenhaus formula \cite{Magnus1954} to factorize the exponential $\exp \{ \mb(t) \}$, $\ub$ can be decomposed like
$$\ub(t) = \ucd(t) U^{\dagger}(t) e^{-\frac{1}{2}[M_\text{CD}(t),M(t)]} e^{o(t^3)}. $$
The unitary matrix $\ucd(t)$, introduced in Sec. \ref{sec:cddriving}, is the solution of $i \partial_t \ucd = H_\text{corr} \ucd$. 
Proceeding as above, one can estimate $-\frac{1}{2}[M_\text{CD}(t),M(t)]=o(t^2)$.
The adiabatic theorem, Eqs. \eqref{adtheo1} and \eqref{eq:adapprox}, then states that $\ucd(t)$ can be approximated to arbitrary precision through an adiabatic process of long duration $\tau$,
\begin{equation} \label{eq:UBapprox}
\ub(t) =\uad(t) U^{\dagger}(t) e^{o(t^2)}  + o(\tau^{-1}). \end{equation}
We are then ready to prove the theorem. First of all, let us observe that, for sufficiently small $t$, one can rewrite
\begin{equation}\label{eq:discret}
\hat{H}_\text{CD} = \frac{\partial}{\partial {\hat t}} \int_0^{\hat t} \hb(t_1) dt_1 = \frac{1}{t} \int_{\hat t}^{\hat{t}+t} \hb(t_1) dt_1 + o(t).
\end{equation}
Applying the exponential map to Eq. \eqref{eq:discret} and using Eq. \eqref{eq:UBapprox} we eventually obtain
\begin{align}
e^{-i \hat{H}_\text{CD} t} & = \ub(\hat{t},\hat{t}+t)e^{o(t^2)} \nonumber, \\
& = U_\text{ad}(\hat{t},\hat{t}+t)U^{\dagger} (\hat{t},\hat{t}+t)e^{o(t^2)}+o(\tau^{-1}). \label{eq:proof1}
\end{align}
Since $U_\text{ad}$ and $U$ can be realized via time dependent control of $H(t)$, they belong to the group $e^{\mathcal L}$.
Equation \eqref{eq:proof1} then means that, for all times $\hat{t}$, the group element $\exp(-i \hat{H}_\text{CD} t)$ generated by $-i \hat H_\text{CD}$ can be approximated by a concatenation of elements of the group $e^{\mathcal L}$, arbitrarily well for sufficiently small $t$ and large $\tau$. That is, $\hat{H}_\text{CD} \in \mathcal L$ for all $\hat{t}$.
\end{proof}
\section{Magnus expansion} \label{appendix:Magnus}
\subsection{Basics and useful bounds}

The Magnus expansion \cite{Magnus1954,Blanes2009} is a representation of the evolution operator $U(t)$, solution of the time-dependent Schr\"odinger equation $i \hbar \partial_t U(t) = H(t) U(t)$, in exponential form $U(t) = \exp\{M(t)\}$. The exponent $M(t)$ is an infinite sum $M(t) = M^{(1)}(t)+M^{(2)}(t) + \dots $ whose first terms read
\begin{align}
 M^{(1)}(t) =&  -\frac{i}{\hbar} \int_0^t H(t_1) dt_1, \\
 M^{(2)}(t) =& \left(\frac{-i}{\hbar}\right)^2 \int_0^t d t_1\int_0^{t_1} d t_2  [H(t_1),H(t_2)], \\
 M^{(3)}(t)  = &\left(\frac{-i}{\hbar}\right)^3 \int_0^t d t_1\int_0^{t_1} d t_2 \nonumber \\
 &\int_0^{t_3} dt_3 \Big( [H(t_1),[H(t_2),H(t_3)]] \nonumber\\
 & +[H(t_3),[H(t_2),H(t_1)]]\Big).
\end{align}

We will in general denote with $M_\text{X}^{(k)}$ the $k$-th Magnus term for the Hamiltonian $H_\text{X}$.
The behavior of the Magnus terms for small integration time $t$ can be conveniently studied by Taylor-expanding the Hamiltonian around the midpoint of the integration interval, and computing the integrals afterwards \cite{Blanes2009}. It turns out that, due to the time symmetry of the expansion, all terms are odd functions of $t$. In particular, one has
\begin{equation} \label{eq:magorder}
M^{(2 m)}(t) = o(t^{2m+1}); \quad M^{(2m+1)}=o(t^{2m+3}).
\end{equation}
Due to its usefulness in Appendix \ref{appendix:proofs}, for the proof of Theorem \ref{theorem1}, we do the explicit calculation for the following ``generalized'' Magnus term, which involves the commutator of two different matrices,
\begin{equation}
\Omega_2(t)  = \frac{1}{2} \int_0^t dt_1 \int_0^{t_1} d t_2 [A(t_1),B(t_2)]. \label{eq:gen2mag}
\end{equation}
Let us denote with $t^*=t/2$ the midpoint, and let us introduce the notation 
$$a_k = \frac{1}{k!} \left.\frac{\partial^{k} A(x)}{\partial^k x}\right\lvert_{t^*}, b_k = \frac{1}{k!} \left.\frac{\partial^{k} B(x)}{\partial^k x}\right\lvert_{t^*},$$
Inserting the Taylor expansions $A(t) = \sum_k a_k (t-t^*)^k$ and $B(t) = \sum_k b_k (t-t^*)^k$ into Eq. \eqref{eq:gen2mag} one obtains
\begin{align}
\Omega_2(t) & = \frac{1}{2} \sum_{k,n}^{0,\infty} \frac{1}{n+1}  \Big\{\frac{1-(-1)^{k+n+2}}{k+n+2} \nonumber \\
& - \frac{(-1)^{n+1}[1-(-1)^{k+1}]}{k+1} \Big\} [a_k,b_k] \left(\frac{t}{2}\right)^{n+k+2} \nonumber, \\
& = \frac{1}{4} [a_0,b_0] t^2 + \frac{1}{24}\left( [a_1,b_0]-[a_0,b_1] \right) t^3 \nonumber \\
& + \frac{1}{48}\left([a_0,b_2]+[a_2,b_0] \right) t^ 4 + o(t^5). \label{eq:magnusmix}
\end{align}

\subsection{Error in the infidelity for the E--CD method} \label{sec:infiderror}

It is useful in our study to estimate the error when the expansion is truncated. In particular, we will focus on the error in the infidelity at the end of one period $T$ of the oscillating control functions, for the problem of approximating $\ub$ by means of $\ua$, and for the class of Hamiltonians characterized by Theorem \ref{theorem2} (see the end of Sec. \ref{sec:approximate}). We will assume that the system starts in the ground state $\ket{gs(t_i)}$ at the initial time $t_i$. Writing the states in terms of the evolution operators, and denoting with $\langle \cdot \rangle$ the expectation value $\bra{gs(t_i)} \cdot \ket{gs(t_i)}$, the infidelity can be written in the form 
\begin{equation}\label{eq:infidprop}\If = 1- |\langle \ub^{\dagger} \ua \rangle |^2.\end{equation} 
We then write the propagators in terms of their Magnus expansions, $\ub^{\dagger}(T)=e^{-\mb(T)}$ and $\ua(T) = e^{\ma(T)}$, and use the Baker-Campbell-Hausdorff formula for computing the product $\ub^{\dagger} \ua$. Assuming that one has solved the constraint equations, discussed in Sec. \ref{sec:approximate}, involving the first $2m$ Magnus terms of $\ua$, then it holds 
$$\sum_{k=1}^m \mb^{(k)} = \sum_{k=1}^{2m}  \ma^{(k)}.$$
We can then write
\begin{align}
\langle \ub^{\dagger } \ua \rangle =& \left\langle \exp \left\{\ma^{(2m+1)} + \ma^{(2m+2)}-\mb^{(m+1)} \right. \right. \nonumber \\
& -\frac{1}{2} \left[\mb^{(1)},\ma^{(2m+1)} \right] \\
 & \left. \left. -\frac{1}{2} \left[\mb^{(m+1)},\ma^{(2)} \right] + \dots\right\}\right\rangle .\label{eq:expvalue}
\end{align}
Let us remark that, since the oscillations in $\ha$ are much faster than the timescales of $\hb$, the behavior in Eq. \eqref{eq:magorder} does hold for $\mb(T)$, that is
$\mb^{(2 m)}(T) = o(T^{2m+1})$ and $\mb^{(2m+1)}(T)= o(T^{2m+3})$.
For the dynamics $\ua$, though, the integration is over a full period and the same is thus not true. Recalling the prefactor $\sqrt{\omega}$ in $\ha$, the E--CD Magnus terms are of order $\ma^{(m)}(T) = o(T^{m/2})$.  Equation \eqref{eq:expvalue} can then be expressed like
\begin{equation}\label{eq:orderexp}
\langle \ub^{\dagger } \ua \rangle = \left\langle \exp \left\{\ma^{(2m+1)} + o\left(T^{m+1}\right)\right\}\right\rangle. \end{equation}
One can now expand the matrix exponential in Taylor series and take the expectation value termwise. Taking the modulus squared, for a generic Hermitian matrix $X$,  it holds
\begin{equation}\label{eq:expformula}\lvert \langle e^{i X t} \rangle \rvert^2 = 1 +  t^2 \Big( \text{Re} \langle X^2 \rangle + |\langle X \rangle|^2 \Big) + o(t^3), \end{equation}
Applying the same reasoning to Eq. \eqref{eq:orderexp} and inserting into Eq. \eqref{eq:infidprop} one finally obtains
\begin{align}
\If(T) & = - \text{Re}\left\langle \left( \ma^{(2m+1)} \right)^2 \right\rangle - \left\lvert \left\langle \ma^{(2m+1)} \right\rangle  \right\rvert^2 + o(T^{2m+\frac{3}{2}}), \label{eq:infidorder} \\
& = o(T^{2m+1}).
\end{align}

\section{Robustness}\label{appendix:robustness}

We test the robustness of the E--CD method against possible experimental imperfection for the LZM case discussed in Sec. \ref{sec:LZM}. We take into account possible errors in the amplitude and relative phase of the driving fields. This is done by adding a static relative offset $\delta$. More specifically, with reference to Eq. \eqref{eq:contham},
\begin{align}
\sin(\omega s \tau) & \to (1+\delta) \sin(\omega s \tau), \nonumber \\
\sin(\omega s \tau) & \to \sin(\omega s \tau + 2 \pi \delta), \label{eq:robustness}
\end{align}
with $-1/2 \le \delta \le 1/2$. The quantity $F = |\braket{gs(t_f) \lvert \psi(t_f)}|^2$ is the final fidelity between system state $\ket{\psi(t_f)}$ and target (ground) state $\ket{gs(t_f)}$, while $F_0$ is the fidelity in the absence of errors. The behavior of the relative error $|1-F/F_0|$ is shown in Fig. \ref{fig:stability} as a function of the offset $\delta$ for the two error situations described by Eq. \eqref{eq:robustness}. Both positive and negative values of $\delta$ are considered. The E--CD method results in being mostly linearly stable against amplitude errors, while it turns out to be quadratically sensitive to phase errors. It should be remarked that in some cases small errors in amplitude can incidentally lead to a small improvement of the method: this can be seen from the two regimes observable for positive $\delta$. The negative peak around $10^{-1}$ indicates the transition between a region $\delta<10^{-1}$ where the protocol improves and a region for greater values where it performs worse instead. The behavior is symmetric with respect to the sign of $\delta$ for phase noise.

\begin{figure}[t!]
\includegraphics[width=\linewidth]{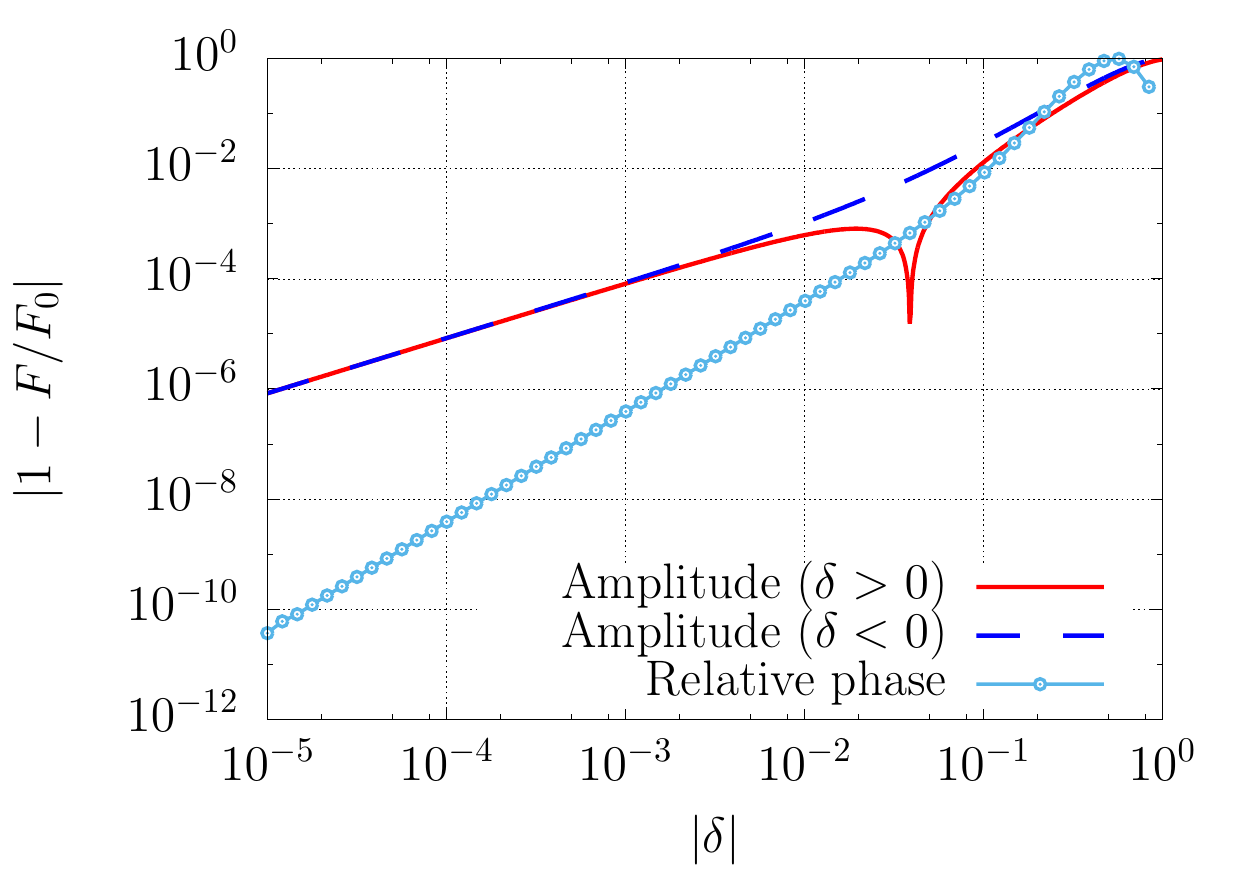}
\caption{Robustness of the E--CD method, Eq. \eqref{eq:contham}, against imperfections in amplitude and relative phase of the control fields as considered in Eq. \eqref{eq:robustness}. For the amplitude, the solid line indicates the results for relative error $\delta>0$, while the dashed line is used for $\delta <0$. In the case of the relative phase, the results are completely symmetric with respect to the sign of $\delta$ (the two lines are overlapping). The dependence on $\delta$ is linear for the amplitude for $|\delta| < 10^{-2}$. It is quadratic instead for the relative phase.}
\label{fig:stability}
\end{figure}

\vspace{0.5cm}

\section{CD field for the two-qubit problem} \label{appendix:diag}
In this appendix, we report the explicit computation of the CD field for the two-qubit problem described in Sec. \ref{sec:2qub}. The initial Hamiltonian in dimensionless units, see Eq. \eqref{eq:2qubham}, is
$$H(s)/\hbar = -\varepsilon(1-s) [\sz^{(1)}+\sz^{(2)}]-[\sx^{(1)} \sx^{(2)}+\sz^{(1)}\sz^{(2)}]. $$
As a first step, we need to diagonalize $H(s)$. First of all, $H(s)$ can be partially diagonalized by writing it in the (time-independent) basis
$$\mathcal B_1 = \left\{  \ket{00}; \frac{\ket{01}+\ket{10}}{\sqrt{2}}; \frac{\ket{10}-\ket{01}}{\sqrt{2}}; \ket{11} \right\}.$$
Let $Q$ be the change-of-basis matrix from basis $\mathcal B_0 =\{\ket{00}, \ket{01},\ket{10},\ket{11} \}$ to $\mathcal B_1$, thus having the kets in $\mathcal B_1$ as columns. 
The Hamiltonian reads
\begin{equation}\label{eq:ham1}
\begin{pmatrix}
-1+2 \varepsilon(s-1) & 0 & 0 & -1 \\ 
 & 0 & 0 & 0 \\
0 & 0 & 2 & 0 \\
-1& 0 & 0 & -1-2\varepsilon(s-1)
\end{pmatrix}.
\end{equation}
We thus see that two levels are actually decoupled among themselves and from the rest of the spectrum.
What remains to be diagonalized is a two-by-two real symmetric matrix (formed by the four corner elements). For this we can use the usual convenient trigonometric parametrization of two-by-two unitary matrices \cite{rice1}, and the matrix diagonalizing \eqref{eq:ham1} is
\begin{equation*}
P(s)=\begin{pmatrix}
-\sin[\theta(s)] & 0 & 0 & \cos[\theta(s)]\\
0 & 1 & 0 & 0 \\
0 & 0 & 1 & 0 \\
\cos[\theta(s)] & 0 & 0 & \sin[\theta(s)]
\end{pmatrix},
\end{equation*}
with $\theta(s) = \frac{1}{2} \arctan \left[\frac{1}{2 \varepsilon (1-s)} \right]. $
Eventually, the full diagonalizing matrix is $U(s)=P(s) Q$, and thus  the CD field can be readily computed, recalling that $s=(t-t_i)/\tau$,
\begin{align*}
\hb(s)&  = \frac{i}{\tau} \frac{\partial U(s)}{\partial s} U(s)^{\dagger}, \\
& = \frac{1}{2 \tau} \frac{\varepsilon}{1+4 \varepsilon^2 (s-1)^2} [\sx^{(1)}\sy^{(2)}+\sy^{(1)} \sx^{(2)}].
\end{align*}

\end{document}